\documentclass[journal]{IEEEtran} 
\usepackage{bm}
\usepackage{epsf}
\usepackage{times}
\usepackage{float}
\usepackage{graphicx}
\usepackage{amsmath,amssymb,amsthm}
\usepackage{color}
\usepackage{cite}
\usepackage{subcaption}
\captionsetup{compatibility=false}
\usepackage{algorithm}
\usepackage{algorithmic}

\usepackage{float} 

\newtheorem{remark}{Remark}
\newtheorem{proposition}{Proposition}

\graphicspath{{subPert/}}

\def \bx{\bm x}
\def \bQ{\bm Q}
\def \bs{\bm s}

\def \bX{\bm X}
\def \by{\bm y}
\def \bz{\bm z}
\def \bv{\bm v}
\def \bC{\bm C}
\def \bPC{\bm {PC}}
\def \bP{\bm P}
\def \bQ{\bm Q}
\def \bB{\bm B}
\def \bI{\bm I}
\def \blambda{\bm \lambda}

\begin{document}
\raggedbottom
\begin{sloppy}
\addtolength{\abovedisplayskip}{-1.0mm}
\addtolength{\belowdisplayskip}{-1.0mm}
\title{\huge Privacy-Preserving Distributed Optimization via Subspace Perturbation:
 A General Framework}

\author{Qiongxiu Li,~\IEEEmembership{Student Member,~IEEE}, Richard Heusdens
and Mads~Gr\ae sb\o ll~Christensen,~\IEEEmembership{Senior Member,~IEEE}
\thanks{Q. Li and M. G. Christensen are with the Audio Analysis Lab, CREATE, Aalborg University, 9000 Aalborg, Denmark  (emails: \{qili,mgc\}@create.aau.dk).}
\thanks{R. Heusdens is with the Department of Microelectronics, Faculty of Electrical Engineering, Mathematics and Computer Science, Delft University of Technology, Delft 2628, CD,
The Netherlands, and also with Netherlands Defence Academy, Kraanstraat 4, The Netherlands (email: R.Heusdens@tudelft.nl).}
}


\maketitle

\begin{abstract}
As the modern world becomes increasingly digitized and interconnected, distributed signal processing has proven to be effective in processing its large volume of data. However, a main challenge limiting the broad use of distributed signal processing techniques is the issue of privacy in handling sensitive data. To address this privacy issue, we propose a novel yet general subspace perturbation method for privacy-preserving distributed optimization, which allows each node to obtain the desired solution while protecting its private data. In particular, we show that the dual variables introduced in each distributed optimizer will not converge in a certain subspace determined by the graph topology. Additionally, the optimization variable is ensured to converge to the desired solution, because it is orthogonal to this non-convergent subspace. We therefore propose to insert noise in the non-convergent subspace through the dual variable such that the private data are protected, and the accuracy of the desired solution is completely unaffected. Moreover, the proposed method is shown to be secure under two widely-used adversary models: passive and eavesdropping. Furthermore, we consider several distributed optimizers such as ADMM and PDMM to demonstrate the general applicability of the proposed method. Finally, we test the performance through a set of applications. Numerical tests indicate that the proposed method is superior to existing methods in terms of several parameters like estimated accuracy, privacy level, communication cost and convergence rate.
\end{abstract}

\begin{IEEEkeywords}
Distributed optimization, privacy, subspace, noise insertion, consensus, least squares, LASSO.
\end{IEEEkeywords}

\section{Introduction}
In a world of interconnected and digitized systems, new and innovative signal processing tools are needed to take advantage of the sheer scale of information/data. Such systems are often characterized by ``big data''. Another central aspect of such systems is their distributed nature, in which the data are usually located in different computational units that form a network. In contrast to the traditional centralized systems, in which all the data must be firstly collected from different units and then processed at a central server, distributed signal processing circumvents this limitation by utilizing the network nature. That is, instead of relying on a single centralized coordination, each node/unit is able to collect information from its neighbors and also to conduct computation on a subset of the overall networked data. This distributed processing has many advantages, such as allowing for flexible scalability of the number of nodes and robustness to dynamical changes in the graph topology. 
Currently, the computational unit/node in distributed systems is usually limited in resources, as tablets and phones become the primary computing devices used by many people \cite{anderson2015technology,poushter2016smartphone}. These devices often contain sensors that can use wireless communication to form so-called ad-hoc networks. Therefore, these devices can collaborate in solving problems by sharing computational resources and sensor data. However, the information collected from sensors such as GPS, cameras and microphones often includes personal data, thus posing a major concern, because such data are private in nature. 

There has been a considerable growth of optimization techniques in the field of distributed signal processing, as many traditional signal processing problems in distributed systems can be equivalently formed as convex optimization problems. Owing to the general applicability and flexibility of distributed optimization, optimization has emerged in a wide range of applications such as acoustic signal processing \cite{sherson2016distributed,koutrouvelis2018low}, control theory~\cite{zhu2011distributed} and image processing~\cite{zibulevsky2010l1}. Typically, the paradigm of distributed optimization is to separate the global objective function over the network into several local objective functions, which can be solved for each node through exchanging data only with its neighborhood. This data exchange is a major concern regarding privacy, because the exchanged data usually contain sensitive information, and traditional distributed optimization schemes do not address this privacy issue. Therefore, how to design a distributed optimizer for processing sensitive data, is a challenge to be overcome in the field. 

\subsection{Related works}
To address the privacy issues in distributed optimization, the literature has mainly used techniques from differential privacy \cite{dwork2006,dwork2006calibrating} and secure multiparty computation (SMPC) \cite{Cramer2015}. Differential privacy is one of the most commonly used non-cryptographic techniques for privacy preservation, because it is computationally lightweight, and it also uses a strict privacy metric to quantify that the posterior guess of the private data is only slightly better than the prior (quantified by a small positive number $\epsilon$). This method of protecting private data has been applied in \cite{huang2015differentially,han2016differentially,nozari2018differentially,zhang2016dynamic,zhang2018recycled,zhang2018improving,xiong2020privacy} through carefully adding noise to the exchanged states or objective functions. However, this noise insertion mechanism involves an inherent trade-off between the privacy and the accuracy of the optimization outputs. 
Additionally, some approaches \cite{huang2019dp,hale2015differentially,hale2018cloud} have applied differential privacy with the help of a trusted third party (TTP) like a server/cloud. However, requiring a TTP for coordination makes the protocol not completely decentralized (i.e., peer-to-peer setting). Consequently, it thus hinders use in many applications in which centralized coordinations are unavailable. 

SMPC, in contrast, has been widely used in distributed processing, because it provides cryptographic techniques to ensure privacy in a distributed network. More specifically, it aims to compute the result of a function of a number of parties' private data while protecting each party's private data from being revealed to others. Examples of how to preserve privacy by using SMPC have been applied in \cite{xu2015secure,freris2016distributed,shoukry2016privacy,wang2011secure}, in which partially homomorphic encryption (PHE) \cite{paillier1999public} was used to conduct computations in the encrypted domain. However, PHE requires the assistance of a TTP and thus cannot be directly applied in a fully decentralized setting. 
Additionally, although PHE is more computationally lightweight than other encryption techniques, such as fully homomorphic encryption \cite{gentry2009fully} and Yao's garbled circuit \cite{yao1982protocols,yao1986generate}, it is more computationally demanding than the noise insertion techniques, such as differential privacy, because it relies on the computational hardness assumption. To alleviate the bottleneck of computational complexity, another technique in SMPC, called secret sharing \cite{damgaard2012multiparty}, has become a popular alternative for distributed processing, because its computational cost is comparable to that of differential privacy. It has been applied in \cite{tjell2020privacy} to preserve privacy by splitting sensitive data into pieces and sending them to the so-called computing parties afterward. However, secret sharing generally is expensive in terms of communication cost, because it requires multiple communication rounds for each splitting process. 

\subsection{Paper contributions}
The main contribution of this paper is that we propose a novel subspace perturbation method, which circumvents the limitations of both the differential privacy and SMPC approaches in the context of distributed optimization. We propose to insert noise in the subspace such that not only the private data is protected from being revealed to others but also the accuracy of results is not affected. The proposed subspace perturbation method has several attractive properties:
\begin{itemize}
  \item Compared to differential privacy based approaches, the proposed approach is ensured to converge to the optimum results without compromising privacy. Additionally, it is defined in a completely decentralized setting, because no central aggregator is required.
  \item In contrast to SMPC based approaches, the proposed approach is efficient in both computation and communication. Because it does not require complex encryption functions (such as those involved in PHE), it does not have high communication costs (such as those required in the secret sharing approaches).
  \item The proposed subspace perturbation method is generally applicable to many distributed optimization algorithms like ADMM, PDMM or the dual ascent method.
  \item The convergence rate of the proposed method is invariant with respect to the amount of inserted noise and thus to the privacy level. 
\end{itemize}
We published preliminary results in \cite{JaneICASSP2020} where the main concept of subspace perturbation was introduced using PDMM with one specific application, and here we give more complete analysis of the proposed subspace perturbation and further generalize it into other optimizers and various applications.    
\subsection{Outline and notation}
The remainder of this paper is organized as follows: Section \ref{section: preliminary} reviews distributed convex optimization and some important concepts for privacy preservation. Section \ref{section:problemDef} defines the problem to be solved and provides qualitative metrics to evaluate the performance. Section \ref{section:pdmm} introduces the primal-dual method of multipliers (PDMM), explaining its key properties used in the proposed approach. Section \ref{section:prop} introduces the proposed subspace perturbation method based on the PDMM. Section \ref{section:genFramework} shows the general applicability of the proposed method by considering other types of distributed optimizers, such as ADMM and the dual ascent method. In Section \ref{section:applications} the proposed approach is applied to a wide range of applications including distributed average consensus, distributed least squares and distributed LASSO. Section \ref{section:numerical} demonstrates the numerical results for each application and compares the proposed method with existing approaches. 

The following notations are used in this paper. Lowercase letters $(x)$, lowercase boldface letters $(\bx)$,  uppercase boldface letters $(\bX)$, overlined uppercase letters $(\bar{X})$ and calligraphic letters $(\mathcal{X})
$ denote scalars, vectors, matrices, subspaces and sets, respectively. An uppercase letter $(X)$ denotes the random variable of its lowercase argument, which means that the lowercase letter $x$ is assumed to be a realization of random variable $X$. 
$\operatorname { null }\{\cdot\} \text { and } \operatorname{span}\{\cdot\}
$ denote the nullspace and span of their argument, respectively.
$(\bX)^\dagger$ and $(\bX)^{\top}$ denote the Moore-Penrose pseudo inverse and transpose of $\bX$, respectively. $\bx_{i}$ denotes the $i$-th entry of the vector $\bx$, and $X_{ij}$ denotes the $(i,j)$-th entry of the matrix $X$.
$\boldsymbol{0}$, $\boldsymbol{1}$ and $ \bm I$ denote the vectors with all zeros and all ones, and the identity matrix of appropriate size, respectively.
\section{Fundamentals}\label{section: preliminary}
In this section, we review the fundamentals and some important concepts related to privacy preservation. We first review the distributed convex optimization and highlight its privacy concerns. Then we describe the adversary models that will be addressed later in this paper. 
\subsection{Distributed convex optimization }\label{subsec:convex}
A distributed network is usually modeled as a graph $\mathcal{G}=(\mathcal{N},\mathcal{E})$, where $\mathcal{N}={\{1,2,...,n}\}$ is the set of nodes, and $\mathcal{E}\subseteq \mathcal{N}\times \mathcal{N}$ is the set of edges. Let $n=|\mathcal{N}|$ and $m=|\mathcal{E}|$ denote the numbers of nodes and edges, respectively. $\mathcal{N}_{i}=\{j\,| \,{(i,j)\in \mathcal{E}\}}$ denotes the neighborhood of node $i$, and $d_{i}=|\mathcal{N}_{i}| $ denotes the degree of node $i $.

Let $\boldsymbol{x_i}\in \mathbb{R}^{u_i}$ and $\bs_{i}\in \mathbb{R}^{p_i}$ denote the local optimization variable and input/measurement at node $i$, respectively. 
A standard constrained convex optimization problem over the network can then be expressed as
\begin{equation} \label{eq.setup}
\begin{array}{ll} {\displaystyle \min_{\bx_i}} & {{\displaystyle\sum_{i \in \mathcal{N}}} f_{i}\left(\bx_{i},\bs_i\right)} 
\\ {\text {s.t.}} & {\bB_{i|j}\bx_{i}+\bB_{j|i}\bx_{j}=\bm b_{i,j} ~ \forall (i,j)\in \mathcal{E}}
\end{array}
\end{equation}
where $ f_{i}: \mathbb{R}^{u_i} \times \mathbb{R}^{p_i}\mapsto \mathbb{R} \cup\{\infty\}$ denote the local objective function at node $i$, which we assume to be convex for all nodes $i \in \mathcal{N}$. Additionally, let $v_{i,j}$ denote the dimension of constraints at each edge $(i,j)\in \mathcal{E}$,  $\bB_{i|j} \in \mathbb{R}^{v_{i,j}\times u_i}$, $\bm b_{i,j}\in \mathbb{R}^{v_{i,j}}$ are defined for the constraints. Note that we distinct the subscripts $i|j$ and $i,j$, where the former is  a directed identifier that denotes the directed edge from node $i$ to $j$ and the later $i,j$ is an undirected identifier. Stacking all the local information and let $N_n=\sum_{i \in \mathcal{N}} u_i$, $P_n=\sum_{i \in \mathcal{N}} p_{i}$, $M_m=\sum_{(i,j) \in \mathcal{E}} v_{i,j}$, we can compactly express \eqref{eq.setup} as
\begin{equation} \label{eq.compact}
\begin{array}{ll} {\displaystyle \min_{\bx}} & {f(\bx,\bs)} \\ {\text {s.t.}} & {\bB\bx=\bm b}\end{array},
\end{equation}
where $f: \mathbb{R}^{N_n}\times \mathbb{R}^{P_n} \mapsto \mathbb{R} \cup\{\infty\}$,  $\bs\in \mathbb{R}^{P_n}$, $\bx\in \mathbb{R}^{N_n}$, $\bB\in \mathbb{R}^{M_m\times N_n}$, $\bm b \in \mathbb{R}^{M_m}$. For simplicity, we assume the dimension of $\bx_i$ of all nodes are the same and set it as $u$, i.e., $u=u_i, \forall i\in \mathcal{N}$ and let all $\bB_{i|j}$ be square matrices, i.e., $v_{i,j}=u_i=u, \forall (i,j)\in \mathcal{E}$. we thus have $M_m=m\times u$ and $N_{n}=n\times u$.  We further define matrix $\bB$ based on the incidence matrix of the graph: $\bB_{i|j}=\bI$ , $\bB_{j|i}=-\bI $ if and only if $(i,j)\in \mathcal{E}$ and $i<j$ and $\bB_{i|j}=-\bI$ , $\bB_{j|i}=\bI $ if and only if $ (i,j)\in \mathcal{E}$ and $i>j$. Note that $\bB$ reduces to the incidence matrix if $u=1$. 
To simplify notation, we will in what follows drop the $\bs$-dependency in the objective functions and simply write $f(\bx)$ and $f_i(\bx_i)$.

To solve the above problem without any centralized coordination, several distributed optimizers have been proposed, such as ADMM \cite{boyd2011distributed} and PDMM \cite{zhang2018distributed, sherson2018derivation}, to iteratively solve the problem by communicating only with the local neighborhood. That is, at each iteration (denoted by index $k$), each node $i$ updates its optimization variable $\bx_{i}^{(k)}$ by exchanging data only with its neighbors. The goal of distributed optimizers is to design certain updating functions to ensure that $\bx_{i}^{(k)} \rightarrow \bx_{i}^{*}$, where $\bx_{i}^{*}$ denotes the optimum solution for node $i$. Generally, these updating functions are functions of the input $\bs_{i}$. 

\subsection{Privacy concerns}
As mentioned in the introduction, the sensor data captured by an individual's device are usually private in nature. For example, health conditions can be revealed by voice signals \cite{alavijeh2019quality,poorjam2019automatic}, and activities of householders can be revealed by power consumption data \cite{giaconi2018privacy}. Therefore, we are able to identity the local input/measurement $\bs_{i}$ held by each node $i$ as the private data to be protected in the context of distributed optimization. 
Recall that after each iteration, node $i$ will send the updated optimization variable $\boldsymbol{x_i}^{(k+1)}$ to all of its neighbors. Since this variable is computed through a function having the private data $\bs_{i}$ as input, the revealed $\bx_{i}^{(k+1)}$ leaks information about $\bs_{i}$. This privacy concern, however, has not been addressed in existing distributed optimizers. 
Therefore, in this paper, we attempt to investigate this privacy issue and propose a general solution to achieve privacy-preserving distributed optimization. 

\subsection{Adversary model}
When designing a privacy-preserving algorithm, it is important to determine the adversary model that qualifies its robustness under various types of security attack. By colluding with a number of nodes, the adversary aims to conduct certain malicious behaviors, such as learning private data or manipulating the function result to be incorrect. These colluding and non-colluding nodes are referred to as corrupted nodes and honest nodes, respectively. Most of the literature has considered only a passive (also called honest-but-curious or semi-honest) adversary, where the corrupted nodes are assumed to  follow the instructions of the designed protocol, but are curious about the honest nodes' private data, i.e., the local measurements $\bs_{i}$. 
Another common adversary is the eavesdropping adversary, which can be internal or external with respect to the network and also aims to infer the private data of the honest nodes. The eavesdropping adversary in the context of privacy-preserving distributed optimization is relatively unexplored. In fact, many SMPC based approaches, such as secret sharing \cite{li2019privacyS,tjell2019privacy,tjell2020privacy}, assume that all messages are transmitted through securely encrypted channels \cite{dolev1993perfectly}, such that the communication channels cannot be eavesdropped upon. However, channel encryption is computationally demanding and is therefore very expensive for iterative algorithms, such as those used here, because they require use of communication channels between nodes many times. In this paper, we design the privacy-preserving distributed optimizers in an efficient way, such that the channel encryption needs to be used only once. 
\section{Problem definition}\label{section:problemDef}
Given the above-mentioned fundamentals, we thus conclude that the goal of privacy-preserving distributed convex optimization is to jointly optimize the constrained convex problem on the basis of the private data of each node while protecting the private data from being revealed under defined adversary models. More specifically, there are two requirements that should be satisfied simultaneously:
\begin{itemize}
\item[1)] Output correctness: at the end of the algorithm, each node $i$ obtains its optimum solution $\bx_{i}^{*}$ and its correct function result $f_{i}(\bx_{i}^{*})$, which implies that the global function result $f(\bx^{*})$ has been also achieved.  
\item[2)] Individual privacy: throughout the execution of the algorithm, the private data $\bs_{i}$ held by each honest node is protected against both passive and eavesdropping adversaries; except for the information that can be directly inferred from the knowledge of the function results and the private data of the corrupted nodes (in section \ref{subsec:lowerBound} we will explain this in detail).
\end{itemize}
To quantify the above requirements, two metrics must be defined. 
\subsection{Output correctness metric}
For each node $i$, achieving the optimum solution $\bx_{i}^{*}$ implies obtaining the correct function output $f_{i}(\bx_{i}^{*})$ as well. To measure the output correctness for the whole network in terms of the amount of communication, we thus use the mean squared error $\|\bx^{(k)}-\bx^{*}\|_2^{2}$ over all the nodes as a function of number of transmission: one transmission denotes that one message package is transmitted from one node to another. 

\subsection{Individual privacy metric} \label{subsec:privacyMeasure}
We now define how to qualitatively measure the individual privacy. Let $\mathcal{N}_{c}$ and $\mathcal{N}_{h}$ denote the set of corrupted and honest nodes, respectively.  Given that distributed optimizers usually require an iterative process, the following criteria are considered in evaluating individual privacy: 

\subsubsection{Local information leakage in each iteration} 
Without loss of generality, we assume that the corrupted nodes are attempting to infer the private data $\bs_{i}$ of honest node $i \in \mathcal{N}_{h}$. In addition, let $v_{i}^{(k)}$ denote the information collected by the corrupted nodes at iteration $k$ to infer information about the private data $\bs_{i}$.
The local information leakage  of node $i$ at iteration $k$ is measured by 
\begin{align}
 \ell^{(k)}_{i}&= I(S_{i};V_{i}^{(k)})= h(S_{i}) - h(S_{i}\, |\,V_{i}^{(k)}),
\end{align}
where $I(\cdot\, ;\cdot)$ denotes mutual information \cite{cover2012elements} and $h(S_{i})$ denotes the differential entropy of $S_{i}$, assume it exists\footnote{For the case that $S_{i}$ is a discrete random variable, the condition is given by Shannon entropy $H(S_{i})$}.

\subsubsection{Cumulative information leakage across all iterations} At the end of the algorithm, the corrupted nodes can combine all the information collected over all the iterations to infer the private data $\bs_{i}$. Let $\mathcal{V}_{i}$ denote the set of cumulative information $\mathcal{V}_{i}=\{v_{i}^{(k)}\}_{k\in \mathcal{K}}$, where $\mathcal{K}=\{0,1,\ldots,K\}$ denotes the set of iterations. The cumulative information leakage becomes
\begin{align}
 \ell_{i}=I(S_{i};V_{i}^{(0)},\ldots,V_{i}^{(K)}).
\end{align}
Of note, $\ell^{(k)}_{i}\leq \ell_{i}$ as $\forall k, \, v_{i}^{(k)} \in \mathcal{V}_{i}$.

\subsubsection{Lower bound of information leakage}\label{subsec:lowerBound}
When defining the individual privacy, we explicitly exclude the information that can be deduced from the function output and the private data of corrupted nodes, because each node will eventually obtain its output from the algorithm, and in some cases, this output may contain certain information regarding the private data held by the individual honest node. To explain this scenario more explicitly, take the distributed average consensus as an example. A group of $n$ people would like to compute their average salary, denoted by ${s_{\mathrm{ave}}}$, while keeping each person's salary $\bs_{i}$ unknown to the others. If the average result is accurate, the salary sum of the honest people can always be computed by $\sum_{i \in \mathcal{N}_{h}}{\bs_{i}}=n\times {s_{\mathrm{ave}}}-\sum_{i \in \mathcal{N}_{c}}{\bs_{i}}$ assuming the adversary knows $n$, regardless of the underlying algorithms. With the mutual information metric, the salary sum will leak $I(S_{i};\sum_{j \in \mathcal{N}_{h}}S_{j})$ amount of information about the salary of the honest node $i$. Provided that this information leakage is unavoidable, we therefore refer to it as the lower bound of information leakage. A privacy-preserving algorithm is considered perfect (or achieves perfect security) as long as it reveals no more information than this lower bound. 

\textbf{Sufficient conditions for perfect security.}
Let the mutual information $I^{*}_{i}$ denote the lower bound of information leakage for node $i$. We conclude that perfect security can be achieved if both the local and accumulative information leakage do not exceed the lower bound. That is, 
\begin{align}\label{eq.perPriv}
    \forall i\in \mathcal{N}, k \in \mathcal{K}\,:\, \ell^{(k)}_{i}\leq \ell_{i}\leq I^{*}_{i}.
\end{align}

\section{Primal-dual method of multipliers}\label{section:pdmm}
Among possible optimizers, we use PDMM to show the proposed subspace perturbation because of its broadcasting property (see Section \ref{subsec:bPDMM}), which allows for simplification of the individual privacy analysis. Moreover, it yields further insight into the constructed subspace. In Section \ref{section:genFramework} we will consider other distributed optimizers. Here we first provide a review of the fundamentals of the PDMM and then introduce its main properties, which will be used later in the proposed approach.
\subsection{Fundamentals}
PDMM is an instance of Peaceman-Rachford splitting of the extended dual problem (refer to \cite{sherson2018derivation} for details). It is an alternative distributed optimization tool to ADMM for solving constrained convex optimization problems and is often characterized by a faster convergence rate \cite{zhang2018distributed}. 
For the distributed optimization problem stated in \eqref{eq.setup}, the extended augmented Lagrangian of PDMM is given by
\begin{align}\label{eq.pdmmSetup}
 L(\bx,\blambda) = f(\bx)+ (\bP\blambda^{(k)})^{\top}\!\!\bC\bx + \frac{c}{2}\|\bC\bx + \bPC\bx^{(k)}-2\boldsymbol{d}\|_2^2, 
\end{align}
and the updating equations of PDMM are given by
\begin{align}
\bx^{(k+1)} &= \arg\min_{\bx} L\left(\bx,\bx^{(k)},\blambda^{(k)} \right), \label{eq.pdmmxUpdate} \\
\blambda^{(k+1)} &= \bP\blambda^{(k)} + c(\bC\bx^{(k+1)} + \bPC\bx^{(k)}-2\boldsymbol{d}), \label{eq.pdmmlUpdate}
\end{align}
where $\bP \in \mathbb{R}^{2 M_m \times 2 M_m}$ is a symmetric permutation matrix exchanging the first $M_m$ with the last $M_m$ rows, and $\boldsymbol{d}=\frac{1}{2}[\boldsymbol{b}^{\top} ~ \boldsymbol{b}^{\top}] \in \mathbb{R}^{2 M_m}$, $c>0$ is a constant controlling the convergence rate. 
$\blambda^{(k)} \in \mathbb{R}^{2 M_m}$ denotes the dual variable at iteration $k$, introduced for controlling the constraints. Each edge $(i,j)\in \mathcal{E}$ is related to two dual variables $\blambda_{i|j},\blambda_{j|i}\in \mathbb{R}^{u}$, controlled by node $i$ and $j$, respectively. 
Additionally, $C \in \mathbb{R}^{2 M_m \times N_n}$ is a matrix related to $\bB$:  $\bC=[\bB_{+}^{\top},\bB_{-}^{\top}]^{\top}$, where $\bB_{+}$ and $\bB_{-}$ are the matrices containing only the positive and negative entries of $\bB$, respectively. 
Of note, $\bC + \bPC = [\bB^{\top} \, \bB^{\top}]^{\top}$ and $\forall (i,j)\in \mathcal{E}: \blambda_{j|i} = \left(\bP\blambda\right)_{i|j}$. 

\subsection{Broadcast PDMM}\label{subsec:bPDMM}
On the basis of \eqref{eq.pdmmxUpdate}, the local updating function at each node $i$ is given by 
\begin{align}
 \bx_{i}^{(k+1)} = &\arg\min_{\bx_{i}} \left(
 f_{i}(\bx_{i})+ \sum_{j \in \mathcal{N}_i} {\blambda_{j|i}^{(k)^{\top}}}\!\!\bB_{i|j}\bx_i \right. \nonumber
 \\ &\hspace{5mm}+ \left.\frac{c}{2}\sum_{j \in \mathcal{N}_i}\|\bB_{i|j}\bx_{i}+\bB_{j|i}\bx_{j}^{(k)} -\boldsymbol{b}_{i,j}\|_2^2 \right) \label{eq.xup} 
 \\
  \forall j \in \mathcal{N}_i: \, &\blambda_{i|j}^{(k+1)} = \blambda_{j|i}^{(k)} + c\big(\bB_{i|j}\bx_{i}+\bB_{j|i}\bx_{j}^{(k)} -\boldsymbol{b}_{i,j}\big). \label{eq.lup}
\end{align}

We can see that updating $\blambda_{i|j}^{(k+1)}$ requires only $\blambda_{j|i}^{(k)}, \bx_j^{(k)}$ and $\bx_i^{(k+1)}$, of which $ \blambda_{j|i}^{(k)}$ and $ \bx_j^{(k)}$ are already available at node $j$. Thus, at each iteration, node $i$ needs to broadcast only $\bx_{i}^{(k+1)}$ after which the neighboring nodes can update $\blambda_{i|j}^{(k+1)}$ themselves. As a consequence,  the dual variables do not need to be transmitted at all, except for the initialization step in which the initialized dual variables $\blambda^{(0)}$ should be transmitted. 

\subsection{Convergence of dual variables}
Consider two successive $\blambda$-updates in \eqref{eq.pdmmlUpdate}, in which we have
\begin{align}\label{eq.pdmmLamUp}
\blambda^{(k+2)} = \blambda^{(k)} + c(\bC\bx^{(k+2)} + 2\bPC\bx^{(k+1)} + \bC\bx^{(k)}),
\end{align}
as $\bP^2 = \bm I$. 
Let $\bar{H}_{p}= \operatorname{span}(\bC)+\operatorname{span}(\bPC)$ and $\bar{H}_{p}^{\perp} = \operatorname{null}(\bC^{\top}) \cap \operatorname{null}((\bPC)^{\top})$. Denote $\Pi_{\bar{H}_{p}}$ as the orthogonal projection onto $\bar{H}_{p}$. From \eqref{eq.pdmmLamUp}, we conclude that every two $\blambda$-updates affect only $\Pi_{\bar{H}_{p}} \blambda \in \bar{H}_{p}$, and $(\bI-\Pi_{\bar{H}_{p}} ) \blambda \in \bar{H}_{p}^{\perp}$
 remains the same.  Moreover, as shown in \cite{sherson2018derivation},
$\left(\bI-\Pi_{\bar{H}_{p}}\right)\blambda$ will only be permuted in each iteration and $\Pi_{\bar{H}_{p}}\blambda$ will eventually converge to $\blambda^*$ given by
\begin{equation}\label{eq.lamdaOptimum}
\blambda^* = -\left( \!\! \begin{array}{c} \bC^{\top} \\ (\bPC)^{\top}\end{array} \!\! \right)^{\!\!\dagger} \left( \!\! \begin{array}{c} \partial f(\bx^*) + c\bC^{\top}\!\bC\bx^* \\ \partial f(\bx^*) + c \bC^{\top}\!\bPC\bx^* \end{array} \!\! \right) + c\bC\bx^*.
\end{equation}
We thus can separate the dual variable into two parts:
\begin{align} 
\blambda^{(k)} &=\Pi_{\bar{H}_{p}}\blambda^{(k)}+(\bI-\Pi_{\bar{H}_{p}})\blambda^{(k)},\nonumber\\
&\rightarrow \blambda^* + \left\{ \begin{array}{ll} (\bI-\Pi_{\bar{H}_{p}}) \blambda^{(0)}, & k \text{ even}, \\ P\left(\bI-\Pi_{\bar{H}_{p}}\right) \blambda^{(0)}, & k \text{ odd}. \rule[4mm]{0mm}{0mm} \end{array} \right.
\label{eq.lamdaUpdata}
\end{align}
Below, $\bar{H}_{p}$ and $\bar{H}_{p}^{\perp}$ are respectively referred to as the convergent subspace and non-convergent subspace associated with PDMM, and similarly ${\Pi_{\bar{H}_{p}}} \blambda$ and ${\left(\bI-\Pi_{\bar{H}_{p}} \right)}\blambda$ are called the convergent and non-convergent   component of the dual variable, respectively. 

\section{Proposed approach using PDMM}\label{section:prop}
Having introduced the PDMM algorithm, we now introduce the proposed approach. To achieve a computationally and communicationally efficient solution for privacy preservation, one of the most used techniques is obfuscation by inserting noise, such as in the differential privacy technique. However, inserting noise usually compromises the function accuracy, because the updates are disturbed by noise. To alleviate this trade-off, we propose to insert noise in the non-convergent   subspace only so that the accuracy of the optimization solution is not affected (see also Remark \ref{rk:xInd}), thus achieving both privacy and accuracy at the same time. The proposed noise insertion method is referred to as subspace perturbation. Below, we first explain the proposed subspace perturbation in detail and then prove that it satisfies both the output correctness and individual privacy requirements stated in Section \ref{section:problemDef}. 
\subsection{Subspace perturbation}
Owing to the broadcasting property of the PDMM, after transmission of the initialized dual variables, the updated optimization variable $\bx_{i}^{k+1}$ is the only information transmitted in the network in each iteration. The main goal of privacy preservation thus becomes minimizing the information loss of $\bs_{i}$ by revealing $\bx_i^{(k+1)}$. From  \eqref{eq.xup}, $\bx_i^{(k+1)}$ is computed by \footnote{Note that $\bB_{i|j}^{\top}=\bB_{i|j}$.} 
\begin{align}\label{eq.xgradient}
  \bm 0 \in & \partial f_{i}(\bx_i^{(k+1)})+\sum_{j \in \mathcal{N}_i} \bB_{i|j}\blambda_{j|i}^{(k)} \nonumber \\&+c\sum_{j \in \mathcal{N}_i}(\bx_i^{(k+1)}-\bx_{j}^{(k)}  -\bB_{i|j}\boldsymbol{b}_{i,j})
  ,
\end{align}
as $\bB_{i|j}\bB_{j|i}=-\bI$ and $\bB_{i|j}\bB_{i|j}=\bI$. Note that only $\partial f_{i}(\bx_{i})$ contains information about the private data $\bs_i$. Given that at convergence $\bx_{j}^{(k)} \rightarrow \bx_{j}^*$ and $\Pi_{\bar{H}_{p}}\blambda^{(k)} \rightarrow \blambda^*$, we then propose to insert noise in the non-convergent   subspace $\bar{H}_{p}^{\perp}$, i.e., $(\bI-\Pi_{\bar{H}_{p}}) \blambda^{(0)}$, for perturbing $\partial f_{i}(\bx_{i})$, thereby protecting $\bs_i$. 
To ensure that $\bx_{i}^{(k+1)}$ does not reveal information about $\bs_i$, we require 
\begin{align} \label{eq.mutuPdmm}
 I(S_{i} ; X_{i}^{(k+1)})=0.
\end{align}
We have the following results. 
\begin{proposition}\label{prop:1}
Let $\{X_{i}\}_{i=1,\ldots,n}$ denote a set of continuous random variables with mean and variance 
$\mu_{X_i}$ and $\sigma^{2}_{X_i}$, respectively.
Let $\{Y_{i}\}_{i=1,\ldots,n}$ be a set of mutually independent random variables, i.e., $I(Y_{i}, Y_j) = 0, i\neq j$, which is also independent of $\{X_{i}\}_{i=1,\ldots,n}$, i.e., $I(X_{i}, Y_j) = 0$ for all $i,j \in \mathcal{N}$. 
Let $Z_{i}=X_{i}+Y_{i}$ and $W_{i}=X_{i}Y_{i}$, and let $Z_i' = Z_i/\sigma_{Z_i}$ and $W_i' = W_i/\sigma_{W_i}$ be the normalized random variables with unit variance. We have
\begin{align} 
 \lim_{\sigma^{2}_{Y_i}\rightarrow\infty} I(X_{1},\ldots,X_{n}; Z_{1},\ldots,Z_{n})=0, \label{eq.muZ}\\
 \lim_{\sigma^{2}_{Y_i}\rightarrow\infty} I(X_{1},\ldots,X_{n}; W_{1},\ldots,W_{n})=0. \label{eq.muW}
\end{align}
\end{proposition}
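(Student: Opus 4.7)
The plan is to split the claim into two reductions: first use the independence structure to collapse the joint mutual information onto a sum of scalar terms, and then attack each scalar limit through an invariance-and-continuity argument built on the normalization $Z_i \mapsto Z_i'$, $W_i \mapsto W_i'$.

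First I would bound the joint mutual information by a sum of scalar mutual informations. For the additive case, independence of the $Y_i$'s from each other and from the $X_i$'s gives $h(Z_1,\ldots,Z_n \mid X_1,\ldots,X_n) = h(Y_1,\ldots,Y_n) = \sum_i h(Y_i)$, while subadditivity of differential entropy yields $h(Z_1,\ldots,Z_n) \leq \sum_i h(Z_i)$. Combining, $I(X_1,\ldots,X_n; Z_1,\ldots,Z_n) \leq \sum_i [h(Z_i)-h(Y_i)] = \sum_i I(X_i; Z_i)$. The same conditioning argument applied to the $X_i$'s yields an analogous inequality for the multiplicative case. It therefore suffices to show the scalar statements $I(X_i;Z_i)\to 0$ and $I(X_i;W_i)\to 0$.

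For the scalar additive case I would exploit invariance of mutual information under the invertible scaling $Z_i \mapsto Z_i'$, so that $I(X_i;Z_i)=I(X_i;Z_i')$. Writing $Y_i=\sigma_{Y_i}\tilde Y_i$ with $\tilde Y_i$ of unit variance, $Z_i' = (\sigma_{X_i}/\sigma_{Z_i})(X_i/\sigma_{X_i}) + (\sigma_{Y_i}/\sigma_{Z_i})\tilde Y_i$. Since $\sigma_{X_i}/\sigma_{Z_i}\to 0$ and $\sigma_{Y_i}/\sigma_{Z_i}\to 1$, $Z_i'\to \tilde Y_i$ in $L^2$, and $\tilde Y_i$ is independent of $X_i$. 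The conditional entropy behaves cleanly: $h(Z_i'\mid X_i)= h(\tilde Y_i)+\log(\sigma_{Y_i}/\sigma_{Z_i}) \to h(\tilde Y_i)$. For the scalar multiplicative case, with $\mu_{Y_i}=0$ for concreteness, $\sigma_{W_i}=\sqrt{\sigma_{X_i}^2+\mu_{X_i}^2}\,\sigma_{Y_i}$, so $W_i' = X_i\tilde Y_i/\sqrt{\sigma_{X_i}^2+\mu_{X_i}^2}$; the $\tilde Y_i$-factor dominates the distributional behaviour, and the goal is again to show that the limiting object decouples from $X_i$ in the relevant sense.

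The hard part is the continuity step, i.e.\ promoting convergence of $Z_i' \to \tilde Y_i$ in $L^2$ into convergence $h(Z_i')\to h(\tilde Y_i)$, since differential entropy is not generally lower semicontinuous under convergence in distribution. I would dispatch this by invoking a standard entropy-continuity lemma under mild regularity of the noise density (e.g.\ a bounded density together with uniform integrability of $-\log f_{Z_i'}$), or by specialising to a family such as Gaussian $Y_i$, for which one has the direct bound $I(X_i;Z_i)\leq \tfrac{1}{2}\log(1+\sigma_{X_i}^2/\sigma_{Y_i}^2)\to 0$. The analogous continuity question for the multiplicative case is more delicate because products interact badly with entropy, and here I expect one must lean on a shape/regularity condition on $\tilde Y_i$ (e.g.\ a continuous density bounded away from degenerate scaling) to conclude that $I(X_i;W_i')\to 0$.
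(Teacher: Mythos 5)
Your proposal follows essentially the same route as the paper: the reduction of the joint mutual information to $\sum_i I(X_i;Z_i)$ via $h(Z_1,\ldots,Z_n\mid X_1,\ldots,X_n)=\sum_i h(Y_i)$ together with subadditivity of differential entropy is exactly the paper's chain-rule-plus-conditioning argument, and the scaling step $I(X_i;Z_i)=I(X_i;Z_i')$ is the paper's step (e). Where you differ is in honesty about the crux: the paper simply writes $\lim_{\sigma_{Z_i}\to\infty} I(X_i/\sigma_{Z_i};Z_i')=I(0;Z_i')=0$, i.e.\ it passes the limit inside the mutual information with no justification, which is precisely the continuity-of-entropy issue you isolate as ``the hard part.'' Your two proposed repairs are both reasonable; note that the Gaussian fallback $I(X_i;Z_i)\le\tfrac12\log(1+\sigma_{X_i}^2/\sigma_{Y_i}^2)$ is literally the paper's Proposition 2, so if you take that route you prove the claim only for (or up to comparison with) Gaussian noise, not the general statement as written --- but the general statement does genuinely require some regularity of how the law of $Y_i$ scales with $\sigma_{Y_i}$, so your instinct is sound rather than overcautious. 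On the multiplicative case your suspicion that ``products interact badly with entropy'' is also well placed: the paper's own computation asserts $h(X_iY_i\mid X_i=x_i)=h(Y_i)$, whereas in fact $h(x_iY_i)=h(Y_i)+\log|x_i|$, so $h(W_i\mid X_i)=h(Y_i)+\mathbb{E}[\log|X_i|]$ and the clean telescoping $h(W_i)-h(Y_i)=I(X_i;W_i)$ used in the paper needs the extra (finite) term $\mathbb{E}[\log|X_i|]$ tracked through the bound; your plan would need to absorb that term, but since it is independent of $\sigma_{Y_i}$ it does not affect the limit.
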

\begin{proof}
See Appendix~\ref{ap.1}.
\end{proof}
With Proposition \ref{prop:1}, we can see that the goal of privacy preservation \eqref{eq.mutuPdmm} can be realized by sufficiently obfuscating private data by using the subspace noise, i.e., the non-convergent   component of the dual variable $(\bI-\Pi_{\bar{H}_{p}}){\blambda}^{(0)}$.
We then conclude that a sufficient condition to ensure \eqref{eq.mutuPdmm} is given by
\begin{align}\label{eq.mutualInfo1}
\exists j \in \mathcal{N}_{i}^{h}: {\rm var}\left(((\bI-\Pi_{\bar{H}_{p}})\Lambda^{(0)})_{j|i}\right) \rightarrow \infty,
\end{align}
where $\mathcal{N}_{i}^{h}= \mathcal{N}_i \cap \mathcal{N}_{h}$ denotes the  honest neighbors of node $i$.
By inspecting the above condition, we have the following remarks.

\begin{remark}\label{rk:bound} (Information loss with finite variance)
In Proposition \ref{prop:1}, we proved that the information loss regarding private data becomes zero if the inserted noise has an infinitely large variance, which is impossible to realize in practice. To show that the proposed method is practically useful, i.e., when the noise variance is finite, the following result gives an  upper bound for the information leakage with Gaussian noise insertion.
\end{remark}
\begin{proposition}\label{prop:2}
Consider two independent random variables $X$ and $Y$ and let $Z=X+Y$, where $X$ denotes the private data and $Y$ denotes the inserted noise for protecting $X$. If we choose to insert Gaussian noise, the mutual information $I(X;Z)$ can be upper bounded by the following
\begin{align}
  I(X;Z)\leq \frac{1}{2}\log(1 +\sigma_{{X}}^2/\sigma_{Y}^2),
\end{align}
the equality holds when $X$ also has a Gaussian distribution.
\begin{proof}
See Appendix~\ref{ap.2}.
\end{proof}

\end{proposition}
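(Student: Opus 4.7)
The plan is to apply the standard identity $I(X;Z) = h(Z) - h(Z\mid X)$ together with the maximum-entropy property of the Gaussian distribution. First I would use independence of $X$ and $Y$: conditioning on $X$ turns $Z = X + Y$ into a deterministic translate of $Y$, so $h(Z\mid X) = h(Y)$. Since $Y$ is Gaussian with variance $\sigma_Y^2$, this evaluates exactly to $\tfrac{1}{2}\log(2\pi e\,\sigma_Y^2)$.

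Next I would bound $h(Z)$ from above. By independence, $\mathrm{Var}(Z) = \sigma_X^2 + \sigma_Y^2$, and the Gaussian distribution maximizes differential entropy among all distributions with a given variance, so $h(Z) \le \tfrac{1}{2}\log\!\bigl(2\pi e\,(\sigma_X^2 + \sigma_Y^2)\bigr)$. Subtracting the two quantities and simplifying the logarithm yields
\begin{equation*}
I(X;Z) \le \tfrac{1}{2}\log\!\bigl((\sigma_X^2 + \sigma_Y^2)/\sigma_Y^2\bigr) = \tfrac{1}{2}\log(1 + \sigma_X^2/\sigma_Y^2),
\end{equation*}
which is the claimed inequality.

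For the equality case, if $X$ is itself Gaussian, then $Z = X+Y$ is a sum of two independent Gaussians and is therefore Gaussian with variance $\sigma_X^2 + \sigma_Y^2$; the max-entropy bound on $h(Z)$ is then attained with equality, and so is the mutual-information bound. This direction only uses closure of the Gaussian family under independent addition, so no technical obstacle arises. The only point worth noting carefully is that the max-entropy inequality requires finite second moments of $X$ (to make $\mathrm{Var}(Z)$ meaningful), which is already implicit in the statement through $\sigma_X^2$; apart from this bookkeeping, the argument is entirely routine given standard properties of differential entropy.
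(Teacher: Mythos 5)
Your proof is correct and follows essentially the same route as the paper's: write $I(X;Z)=h(Z)-h(Y)$ using independence, evaluate $h(Y)$ exactly for the Gaussian noise, and bound $h(Z)$ via the maximum-entropy property of the Gaussian with variance $\sigma_X^2+\sigma_Y^2$. Your treatment of the equality case (closure of Gaussians under independent addition) is a slightly more explicit justification than the paper gives, but the argument is the same.
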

From Proposition \ref{prop:2}, we can see that the if the variance of inserted Gaussian noise is 10 and 100 times the variance of the private data, the maximal information loss is only $7\times10^{-2}$ and $7\times10^{-3}$ bits, respectively. 
\begin{remark}\label{rk:null} ($\blambda^{(0)}\cap \bar{H}_{p}^\perp \neq \emptyset$ can be realized by randomly initializing $\blambda^{(0)}$). Of note, $[\bC~ \bPC]\in \mathbb{R}^{2M_m\times2N_n}$ can be viewed as a new graph incidence matrix with $2N_n$ nodes and $2M_m$ edges (see (c) in Fig. \ref{fig:graphTopo} for an example); we thus have $\mathrm{dim}(\bar{H}_{p}) \leq 2N_n-1$, and $\bar{H}_{p}^\perp$ is non-empty. For a connected graph with the number of edges not less than the number of nodes (i.e., $M_m\geq N_n$), we conclude that a randomly initialized $\blambda^{(0)} \in \mathbb{R}^{2M_m}$ will achieve $\blambda^{(0)}\cap \bar{H}_{p}^\perp \neq \emptyset$ with probability 1. 
\end{remark}

\begin{remark}\label{rk:xInd} (No trade-off between privacy and accuracy)
No matter how much noise is inserted in the non-convergent  subspace, the convergence of the optimization variable $\bx\rightarrow \bx^{*}$ is guaranteed. By inspecting \eqref{eq.pdmmSetup}, we can see that the $\bx$-update is independent of $(\bI-\Pi_{\bar{H}_{p}}) \blambda$ as $\blambda^{\top} \left(\bI-\Pi_{\bar{H}_{p}}\right)\bPC = 0$. 
\end{remark} 

Details of the proposed approach using PDMM are shown in algorithm \ref{alg:pdmm}. And the analysis of both output correctness and individual privacy is summarized below.
\subsection{Output correctness}
As proven in \cite{sherson2018derivation}, with strictly convex  $f(\bx_i)$, the optimization variable $\bx_{i}^{(k)}$ of each node $i$ of the PDMM is guaranteed to converge geometrically (linearly on a logarithmic scale) to the optimum solution $\bx_{i}^{*}$, regardless of the initialization of both $\bx^{(0)}$ and $\blambda^{(0)}$, thereby ensuring the correctness. Moreover, for convex functions that are not strictly convex, a slightly modified version called averaged PDMM (see Section \ref{subsec.lasso} for an example) can be used to guarantee the convergence. 
\subsection{Individual privacy}
As stated before, we consider both passive and eavesdropping adversaries in this paper. From \eqref{eq.mutualInfo1}, we conclude that the proposed algorithm achieves asymptotically perfect security against a passive adversary as long as the honest node has at least one honest neighbor, i.e., $\mathcal{N}_{i}^{h}\neq \emptyset$. Additionally, because privacy is ensured by $\blambda^{(0)}$, we conclude that the proposed approach is also secure against an eavesdropping adversary without requiring securely encrypted channels except for the first iteration for transmitting the initialized $\blambda^{(0)}$.

\begin{algorithm}[t]
\vskip -2pt
\caption{Privacy-preserving distributed optimization via subspace perturbation using PDMM}
\vspace{1.2mm}
\begin{algorithmic}[1] \label{alg:pdmm}
\STATE Each node $i \in \mathcal{N}$ initializes its optimization variable $\bx_{i}^{(0)}$ arbitrarily, and its dual variables $\blambda_{i|j}^{(0)},j \in \mathcal{N}_{i}$ are randomly initialized with a certain distribution with large variance (specified by the required privacy level). 
\STATE Node $i$ sends the initialized $\{\bx_{i}^{(0)},\blambda_{i|j}^{(0)}\}$ to its neighbor $j$ through secure channels \cite{dolev1993perfectly}.
\WHILE {$\|\bx^{(k)} - \bx^*\|_2 < $ threshold}
\STATE Activated a node uniformly at random, e.g., node $i$, updates $\bx_{i}^{(k+1)}$ using \eqref{eq.pdmmxUpdate}.
\STATE Node $i$ broadcasts $\bx_{i}^{(k+1)}$ to its neighbors $j\in \mathcal{N}_{i}$ (through non-secure channels).
\STATE \label{stepLamda} After receiving $\bx_{i}^{(k+1)}$, each neighbor $j\in \mathcal{N}_{i}$ updates $\blambda_{i|j}^{(k+1)}$ using \eqref{eq.pdmmlUpdate}.
\ENDWHILE
\end{algorithmic}
\vskip -2pt
\label{alg:ppdmm}
\end{algorithm}
\section{Proposed approach using other optimizers}\label{section:genFramework}
In this section, we demonstrate the general applicability of the proposed subspace perturbation method. In fact, the proposed method can be generally applied to any distributed optimizer if the introduced dual variables converge only in a subspace (i.e., there is a non-empty nullspace), which is indeed usually true, because these optimizers often work in a subspace determined by the incidence matrix of a graph. To substantiate this claim, we will show that the subspace perturbation also applies to ADMM and the dual ascent method. We then illustrate their differences by linking the convergent subspaces to their graph topologies.

\subsection{ADMM}
Given a standard distributed optimization problem stated in \eqref{eq.setup}, the augmented Lagrangian function of ADMM \cite{boyd2006randomized} is given by
\begin{equation}
L(\bx,\bv,\bz)= f(\bx)+\boldsymbol{v}^{\top}(\bm  M\bx+\bm W\boldsymbol{z})
+\frac{c}{2}\|\bm M\bx+\bm W\boldsymbol{z}-2\boldsymbol{d}\|^{2} ,
\end{equation}
where $\bm M \in \mathbb{R}^{2M_m \times N_n}$, like PDMM, is a matrix related to the graph incidence matrix, and $\bm M=[\bB_{+}^{\top},-\bB_{-}^{\top}]^{\top}$, $\bm W = \begin{bmatrix} -\bI^{\top} -\bI^{\top} \end{bmatrix}^{\top}\in \mathbb{R}^{2 M_m \times M_m}$. $\boldsymbol{\nu}\in \mathbb{R}^{2M_m}$ and $\boldsymbol{z}\in \mathbb{R}^{2M_m}$ are the introduced dual variables and auxiliary variable for constraints, respectively. The updating function of ADMM is given by 
\begin{align}
  \bx^{(k+1)}&=\arg \min _{\bx} L\left(\bx, \boldsymbol{z}^{(k)}, \boldsymbol{\nu}^{(k)}\right) \\ 
\boldsymbol{z}^{(k+1)}&=\arg \min _{\boldsymbol{z}} L\left(\bx^{(k+1)},\boldsymbol{z}, \boldsymbol{\nu}^{(k)}\right) \label{eq.admmzUpdata}\\ 
\boldsymbol{\nu}^{(k+1)}&=\boldsymbol{\nu}^{(k)}+c\left(\bm M \bx^{(k+1)}+\bm W\boldsymbol{z}^{(k+1)}-2\boldsymbol{d}\right) \label{eq.admmvUpdata}
\end{align}
By inspecting the $\boldsymbol{v}$-update in \eqref{eq.admmvUpdata}, we can see that it has a similar structure to that of the $\blambda$-update in \eqref{eq.pdmmLamUp} of PDMM.  Let $\bar{H}_{a}=\operatorname{span}(\bm M)+\operatorname{span}(\bm W)$. For feasibility, we assume $\boldsymbol{d} \in \bar{H}_{a}$. Similarly to PDMM, in this process, every $\boldsymbol{v}$-update has effects only in $\left(\Pi_{\bar{H}_{a}}\right) \boldsymbol{v} \in \bar{H}_{a}$ and leaves $(\bI-\Pi_{\bar{H}_{a}} ) \boldsymbol{v} \in \bar{H}_{a}^{\perp}, \bar{H}_{a}^{\perp} = \operatorname{null}(\bm M^{\top}) \cap \operatorname{null}((\bm W)^{\top})$ unchanged. 

Of note, ADMM is not a broadcasting protocol, i.e., it requires pairwise communications for transmitting the dual variable and auxiliary variable. The effects on privacy are addressed below.
\begin{remark}
(Revealing the dual variables $\bm v$ and the auxiliary variable $\bm z$ will not disclose more information than revealing the optimization variable $\bx$.) By inspecting \eqref{eq.admmzUpdata} and \eqref{eq.admmvUpdata}, we can see that at each iteration $S_{i}\rightarrow X_{i}^{(k)} \rightarrow Z_{i|j}^{(k)}$ and $S_{i} \rightarrow X_{i}^{(k)} \rightarrow V_{i|j}^{(k)}$ both form a Markov chain. As a consequence, for each honest node $i$, we have 
\begin{align}
 \forall j\in \mathcal{N}_i: ~ I(S_{i} ; Z_{i|j}^{(k+1)}) \leq I(S_{i} ; X_{i}^{(k+1)}), \\
 \forall j\in \mathcal{N}_i: ~ I(S_{i} ; V_{i|j}^{(k+1)}) \leq I(S_{i} ; X_{i}^{(k+1)}),
\end{align}
by the data processing inequality \cite{cover2012elements}.
\end{remark}
With the above result, we conclude that the proof for output correctness and individual privacy using ADMM follows a similar structure as that of PDMM; the only difference is the constructed subspace. 
We thus conclude that privacy-preserving ADMM can be achieved by subspace perturbation, i.e., randomly initializing its dual variables $\boldsymbol{\nu}^{(0)}$ with a certain distribution with a large variance.

\subsection{Dual ascent method}
The Lagrangian of the dual ascent method for solving \eqref{eq.setup} is given by
\begin{equation}
L(\bx,\bm u)=f(\bx)+\boldsymbol{u}^{\top}(\bB\bx-\boldsymbol{b}),
\end{equation}
where $\boldsymbol{u}\in \mathbb{R}^{M_m}$ is the introduced dual variable. The updating function is given by 
\begin{align}
\bx^{(k+1)}=\arg \min _{\bx} L\left(\bx, \boldsymbol{u}^{(k)}\right) \\ 
\boldsymbol{u}^{(k+1)}=\boldsymbol{u}^{(k)}+t^{(k)}
\left(\bB\bx^{(k+1)}-\boldsymbol{b} \right), \label{eq.dauUpdate}
\end{align}
where $t^{(k)}$ denotes the step size at iteration $k$. 
Likewise, the $\boldsymbol{u}$-update in \eqref{eq.dauUpdate} has a similar structure as the $\blambda$-update of PDMM, wherein the convergent subspace is $\bar{H}_{d}= \operatorname{span}(\bB)$. Moreover, as with ADMM, revealing the dual variable $\bm u$ will not increase the information loss beyond revealing the optimization variable $\bx$. Hence, we conclude that the proposed subspace perturbation method also works for the dual ascent method.

\subsection{Linking graph topologies and subspaces}\label{subsec:topo}
Thus far, we have shown that the dual variable updates of PDMM, ADMM and the dual ascent method are dependent only on their corresponding subspaces: $\bar{H}_{p}=\operatorname{span}(\bC)+\operatorname{span}(\bPC)$, $\bar{H}_{a}=\operatorname{span}(\bm M)+\operatorname{span}(\bm W)$ and $\bar{H}_{d}= \operatorname{span}(\bB)$. Note that each of the matrices $[\bC ~\bPC]$, $[\bm M ~ \bm W]$ and $\bB$ can be seen as an incidence matrix of an extended graph, therefore they all have a non-empty left nullspac for subspace perturbation as long as $m\geq n$ (Remark \ref{rk:null}). To examine the appearance of these constructed graphs, in Fig \ref{fig:graphTopo} we give an example of these extended graphs and provide illustrative insights into the differences among these optimizers. 

\begin{figure*}
    \centering
    \includegraphics[width=0.8\linewidth]{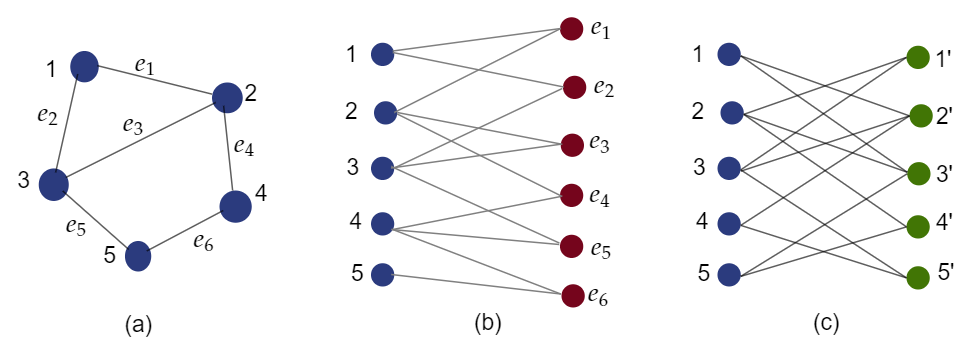}
\caption{An example of graph topologies associated with dual ascent, ADMM and PDMM with $u=1$: (a) A graph with $n=5$ nodes and $m=6$ edges. (b) The bipartite graph constructed by ADMM with $n+m$ nodes and $2m$ edges. (c) The bipartite graph constructed by PDMM with $2n$ nodes and $2m$ edges. }
\label{fig:graphTopo}
\end{figure*}

\section{Applications}\label{section:applications}
To demonstrate the potential of the proposed approach to be used in a wide range of applications, we now present the application of the proposed subspace perturbation to three fundamental problems: distributed average consensus, distributed least squares and distributed LASSO, because they serve as building blocks for many other signal processing tasks, such as denoising~\cite{pang2015optimal}, interpolation~\cite{narang2013signal}, machine learning \cite{tibshirani1996regression,wright2008robust} and compressed sensing \cite{candes2006near,donoho2006compressed}. 
We first introduce the application and then perform an analysis of the individual privacy.
We will continue using PDMM to introduce the details, but the numerical results of using all the discussed optimizers will be presented in the next section. 
\subsection{Privacy-preserving distributed average consensus}\label{subsec:consensus}
Privacy-preserving distributed average consensus is to estimate the average of all the nodes' initial state values over a network and keep each node's initial state value unknown to others. 
Such privacy-preserving solutions are highly desired in practice. For example, in face recognition applications, computing mean faces is usually required, thus prompting privacy concerns. Here, a group of people may cooperate to compute the mean face, but none of them would like to reveal their own facial images during the computation. 

The optimization problem setup \eqref{eq.setup} becomes 
\begin{equation} \label{eq.setupAve}
\begin{array}{ll}{\displaystyle \min_{\bx_i}} & {{\displaystyle\sum_{i \in \mathcal{N}} \frac{1}{2}\|\bx_i - \bs_i\|^2_2}} \\ {\text { s.t. }} & {\bx_i=\bx_j, {\forall} (i,j)\in \mathcal{E}}.\end{array}
\end{equation}
The optimum solution for each optimization variable is $\bx^{*}_i=n^{-1}\sum_{i \in \mathcal{N}} \bs_i $ and 
$\bx^* = (\bx^{*}_1,\ldots,\bx^{*}_n)^{\top}$.
With PDMM, the $\bx_i^{(k+1)}$ computed by \eqref{eq.xgradient} becomes 
\begin{align}
&\bx_i^{(k+1)} = \frac{\bs_{i} + \sum_{j\in \mathcal{N}_{i}} \left(c\bx_j^{(k)} - \bB_{i|j}\blambda^{(k)}_{j|i} \right)}{1+cd_i}. \label{eq.xup_ave}
\end{align}
\subsubsection{Individual privacy}
Based on $I(S_{i};X_{i}^{(k+1)})$, we analyze the individual privacy. 
With \eqref{eq.lamdaUpdata}, the numerator of \eqref{eq.xup_ave} can be separated as 
\begin{align}
&\bs_{i} + \sum_{j\in \mathcal{N}_{i}} c\bx_j^{(k)} - \sum_{j\in \mathcal{N}_{i}^{c}} \bB_{i|j}\blambda^{(k)}_{j|i}- \sum_{j\in \mathcal{N}_{i}^{h}} \bB_{i|j} \left(\Pi_{\bar{H}_{p}}\blambda^{(k)}\right)_{j|i}\nonumber \\
& - \sum_{j\in \mathcal{N}_{i}^{h}} \bB_{i|j} \left( P^{k} \left(\bI - \Pi_{\bar{H}_{p}}\right) \blambda^{(0)}\right)_{j|i},
\label{eq.xl}
\end{align}
where $\mathcal{N}_{i}^{c}=\mathcal{N}_i \cap \mathcal{N}_{c}$ denotes the corrupted neighbourhood of node $i$.
At convergence, we have $\bx_{i}^{(k)}\rightarrow \bx_{i}^* $ and $\Pi_{\bar{H}_{p}}\blambda^{(k)} \rightarrow \blambda^*$ given by \eqref{eq.lamdaOptimum}. We can see that the last term in \eqref{eq.xl} is unknown to the adversary, and it can protect the private data $\bs_{i}$ with a sufficiently large perturbation. 
By applying \eqref{eq.muZ} in Proposition \ref{prop:1} with dimension $n=1$,
we can achieve $I(S_{i};X_{i}^{(k)}) = 0$ under the condition of \eqref{eq.mutualInfo1}.
As for the lower bound of information leakage, the revealed information would be the partial sums of all the honest components (connected subgraphs consist of honest nodes only) after removal of all the corrupted nodes \cite{JaneICASSP2020}. Denote the node set of the honest component that the honest node $i$ belongs to as $\mathcal{C}$. The lower bound of information leakage for node $i$ thus becomes $I_{i}^{*}=I(S_{i};\sum_{j \in \mathcal{C}}S_{j})$. 
Additionally, the cumulative information leakage does not exceed $I_{i}^{*}$, because manipulating the information over all the iterations will reveal only the partial sums. 
Hence, the proposed algorithm obtains asymptotically perfect security in distributed average consensus applications, because \eqref{eq.perPriv} is satisfied. 

\subsection{Privacy-preserving distributed least squares}
Privacy-preserving distributed least squares aims to find a solution for a linear system (here we consider an overdetermined system in which there are more equations than unknowns) over a network in which each node has only partial knowledge of the system and is only able to communicate with its neighbors, and in the meantime the local information held by each node should not be revealed to others. More specifically, here the local information of node $i$ means both the input observations, denoted by $\bQ_i \in \mathbb{R}^{p_{i} \times u}, \, p_i>u$ and decision vector, denoted by $\by_i\in \mathbb{R}^{p_{i}}$. That is, each node $i$ has $p_i$ observations, and each contains an $u$-dimensional feature vector. Collecting all the local information, we thus have $\bQ=[\bQ^{\top}_1,\dots,\bQ^{\top}_n]^{\top} \in \mathbb{R}^{P_n \times u}$ and $\by=[\by^{\top}_1,\dots,\by^{\top}_n]^{\top}\in \mathbb{R}^{P_n}$, recall $P_n=\sum_{i\in \mathcal{N}} p_{i}$.  The reason for identifying the private data of each node $i$ as the local information $\boldsymbol{y}_i$ and $\bQ_i$ is that they usually contain users' sensitive information. Take the distributed linear regression as an example, which is widely used in the field of machine learning, and consider the case that several hospitals want to collaboratively learn a predictive model by exploring all the data in their datasets. Although such collaborations are limited because they must comply with policies such as the general data protection regulation (GDPR) and  because individual patients/customers may feel uncomfortable with revealing their private information to others, such as insurance data and health condition. 

The least-squares problem in a distributed network can be formulated as a distributed linearly constrained convex optimization problem, and the problem setup in \eqref{eq.setup} becomes
\begin{equation} \label{eq.LS}
\begin{array}{ll}{\displaystyle \min_{\bx_i}} & {\sum_{i \in \mathcal{N}} \frac{1}{2}\|\by_{i}-\bQ_i\bx_{i}\|^{2}_{2}} \\ {\text { s.t. }} & {\bx_i=\bx_j, {\forall} (i,j)\in \mathcal{E}},\end{array}
\end{equation}
where the optimum solution is given by $\bx_i^{*}=(\bQ^{\top}\bQ)^{-1}\bQ^{\top}\by \in \mathbb{R}^{u}, \forall i\in \mathcal{N}$.
With PDMM, the $\bx$-update \eqref{eq.pdmmxUpdate} for node $i$ becomes
\begin{align}\label{eq.xUpdate_LS}
&\bx_{i}^{(k+1)}=\\
&(\bQ_i^{\top}\bQ_i+c d_i \bI)^{-1} 
\left( \bQ_i^{\top}\by_i \nonumber + \sum_{j\in \mathcal{N}_{i}} (c\bx_j^{(k)} - \bB_{i|j}\blambda^{(k)}_{j|i}) \right). 
\end{align}

\subsubsection{Individual privacy}
To analyze how much information the adversary would learn about the private data $\bQ_i$ and $y_{i}$ by observing $\bx_{i}^{(k+1)}$, with \eqref{eq.lamdaUpdata}, we can separate \eqref{eq.xUpdate_LS} into 
\begin{align}\label{eq.xmu}
  &(\bQ_i^{\top}\bQ_i+c d_i \bI)^{-1} \left( \sum_{j\in \mathcal{N}_i^{h}} \big(c\bx_j^{(k)}-\bB_{i|j}\big( P^{k} \Pi_{\bar{H}_{p}}\blambda^{(k)}\big)_{j|i}\big)\right. \nonumber\\ 
  &- \left. \sum_{j\in \mathcal{N}_i^{h}}\bB_{i|j}\big( P^{k} (\bI - \Pi_{\bar{H}_{p}})\blambda^{(0)}\big)_{j|i} +\bQ_i^{\top}\by_i+ c_{p} \right),
\end{align}
where $c_{p}=\sum_{j\in \mathcal{N}_{i}^{c}} \big(c\bx_{j}^{(k)}-\bB_{i|j}\blambda_{j|i}^{(k)}\big)$
is known by the adversary. Because both $\bx_{j}^{(k)} \rightarrow \bx_{j}^*$ and $\Pi_{\bar{H}_{p}}\blambda^{(k)} \rightarrow \blambda^*$, again, we use subspace noise, i.e., the non-convergent   component of the dual variable $(\bI - \Pi_{\bar{H}_{p}})\blambda^{(0)}$, to protect the private data. More specifically, the information loss regarding the private data $\bQ_i$ and $\by_i$ can be quantified by the mutual information $I(Q_{i},X_{i}^{(k+1)})$ and $I(Y_{i},X_{i}^{(k+1)})$, respectively.
Given that $\blambda^{(0)}$ is independent of both $\bQ_{i}$ and ${\by}_{i}$, by applying Proposition \ref{prop:1}, we again can achieve $I(Q_{i},X_{i}^{(k+1)})=0$ and $I(Y_{i},X_{i}^{(k+1)})=0$ as long as
\eqref{eq.mutualInfo1} is satisfied. 
Additionally, collecting all the information over the entire iterative process will not increase the information loss. 
We therefore conclude that the sufficient condition \eqref{eq.perPriv} is satisfied.
Hence, the proposed approach achieves asymptotically perfect security in the distributed least squares problem as well. 
\begin{figure*}[ht]
\begin{subfigure}{0.32\textwidth}
\includegraphics[width=0.9\linewidth]{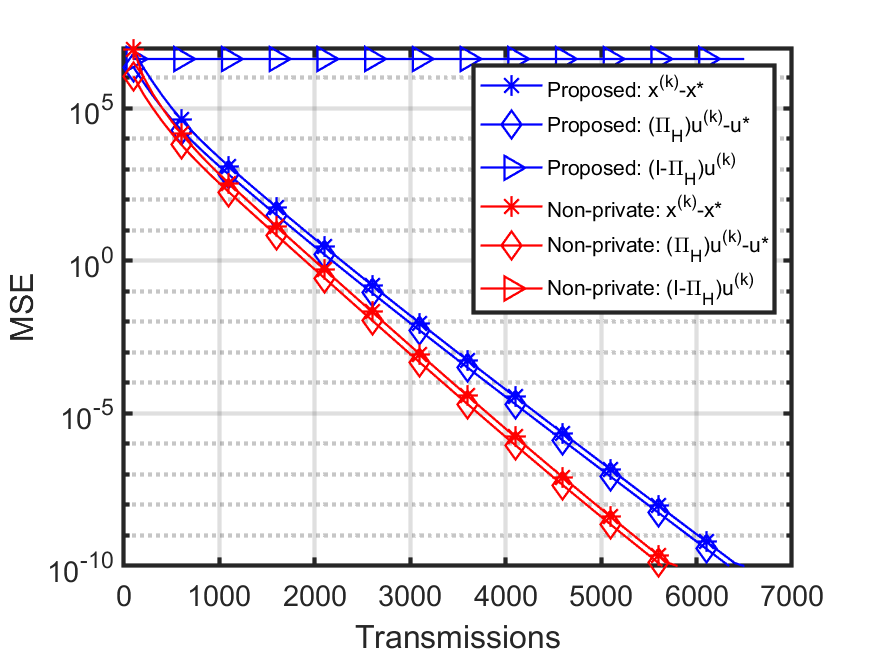} 
\caption{}
\end{subfigure}
\begin{subfigure}{0.32\textwidth}
\includegraphics[width=0.9\linewidth]{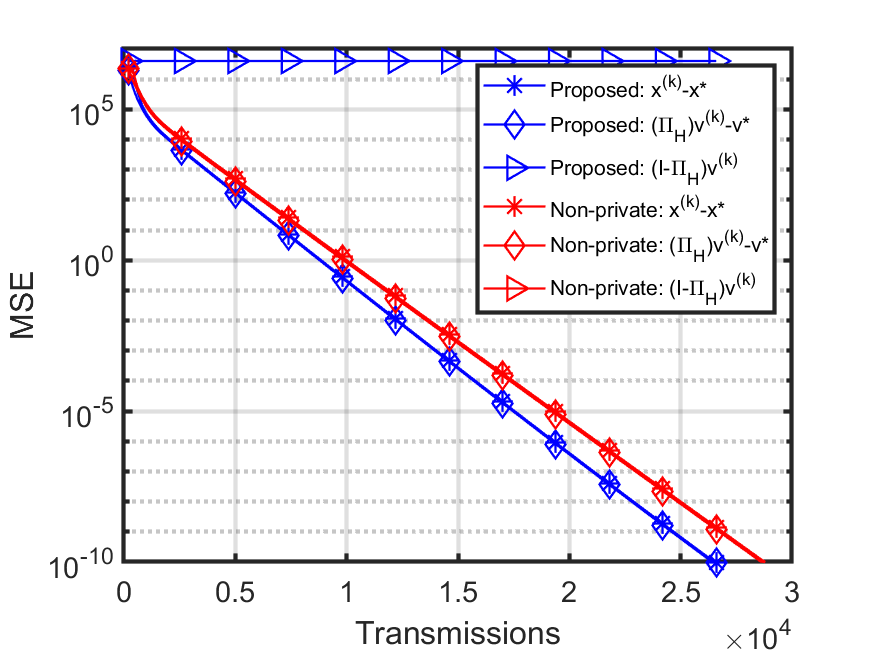}
\caption{}
\end{subfigure}
\begin{subfigure}{0.32\textwidth}
\includegraphics[width=0.9\linewidth]{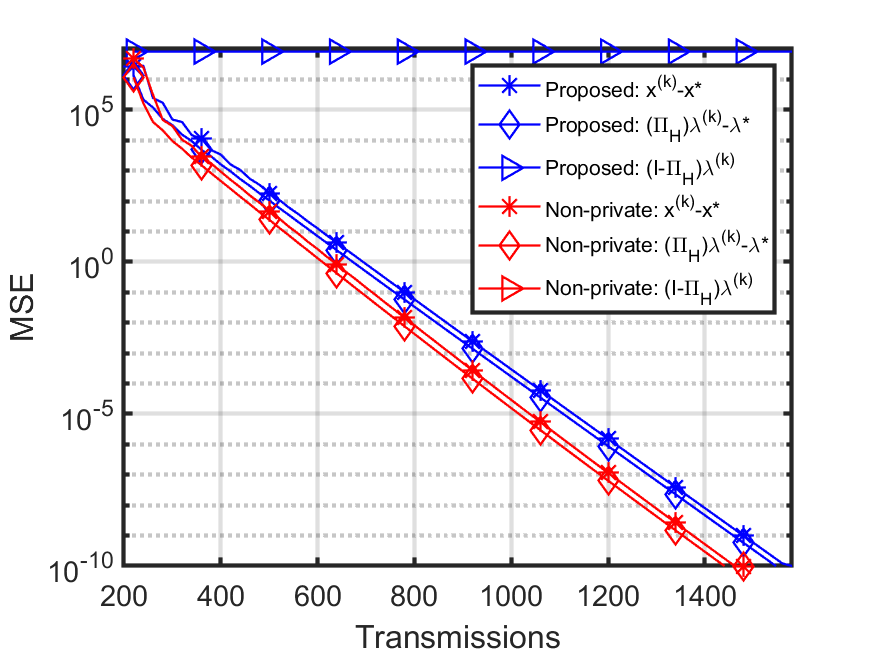}
\caption{}
\end{subfigure}
\caption{Distributed average consensus with two different initializations of the dual variable with a variance of $10^{6}$: convergence of the optimization variable, the convergent and non-convergent   component of the dual variable, using (a) dual ascent, (b) ADMM and (c) PDMM.}
\label{fig.aveCon}
\end{figure*}

\begin{figure*}[ht]
\begin{subfigure}{0.32\textwidth}
\includegraphics[width=0.9\linewidth]{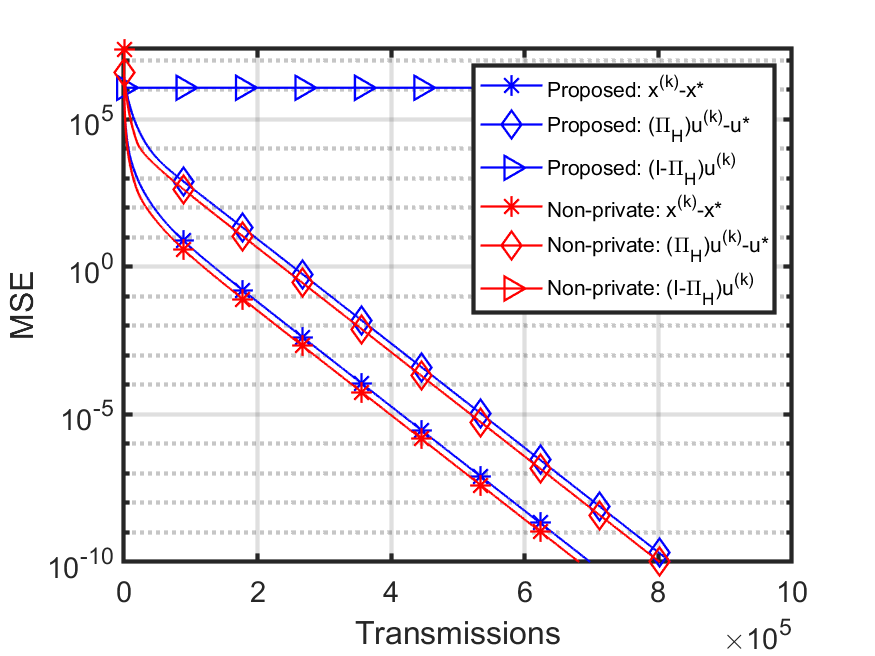} 
\caption{}
\end{subfigure}
\begin{subfigure}{0.32\textwidth}
\includegraphics[width=0.9\linewidth]{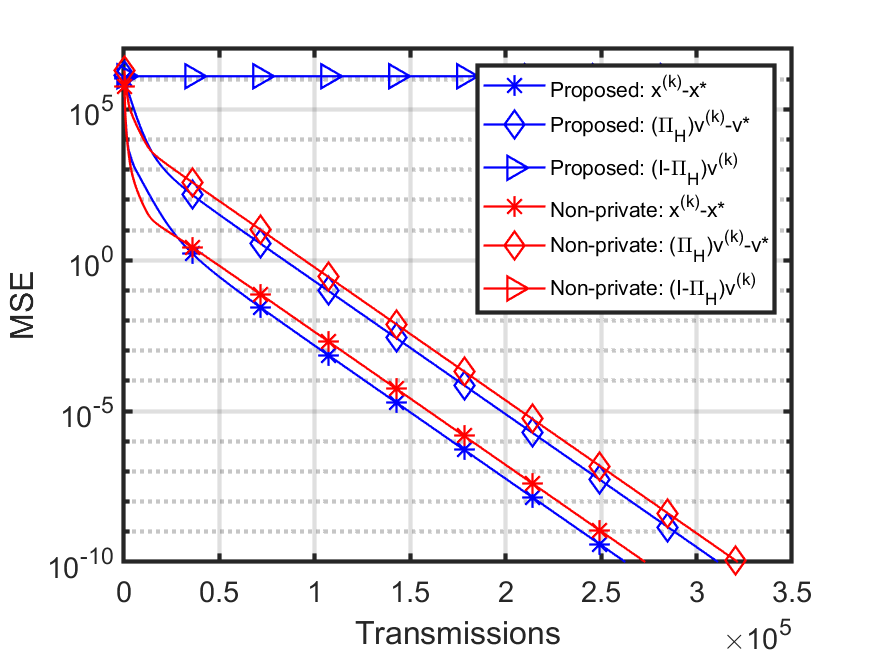}
\caption{}
\end{subfigure}
\begin{subfigure}{0.32\textwidth}
\includegraphics[width=0.9\linewidth]{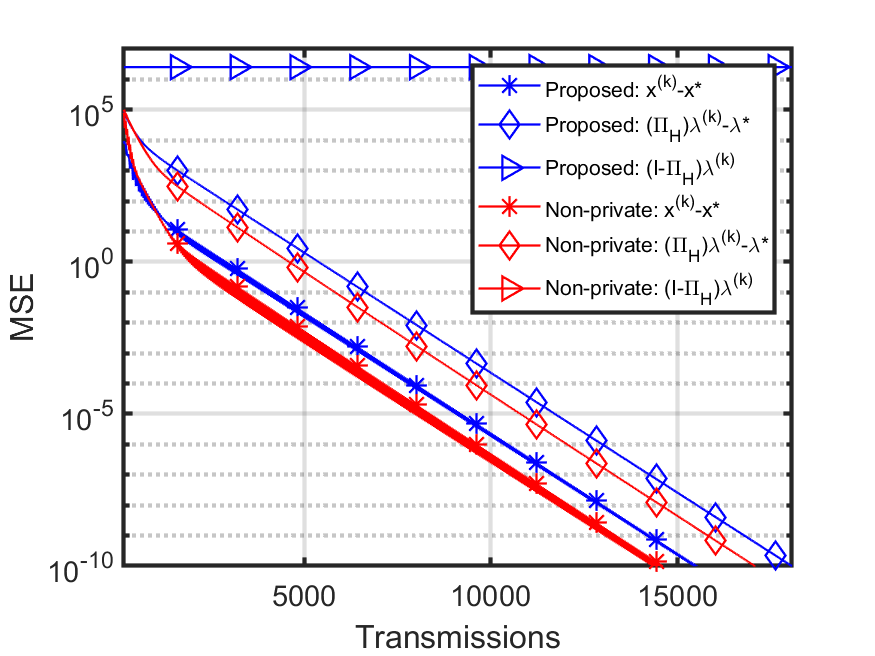}
\caption{}
\end{subfigure}
\caption{Distributed least squares with two different initializations of the dual variable with a variance of $10^{6}$: convergence of the optimization variable, the convergent and non-convergent   component of the dual variable of (a) dual ascent, (b) ADMM and (c) PDMM.}
\label{fig.lsCon}
\end{figure*}

\begin{figure*}[ht]
\begin{subfigure}{0.32\textwidth}
\includegraphics[width=0.9\linewidth]{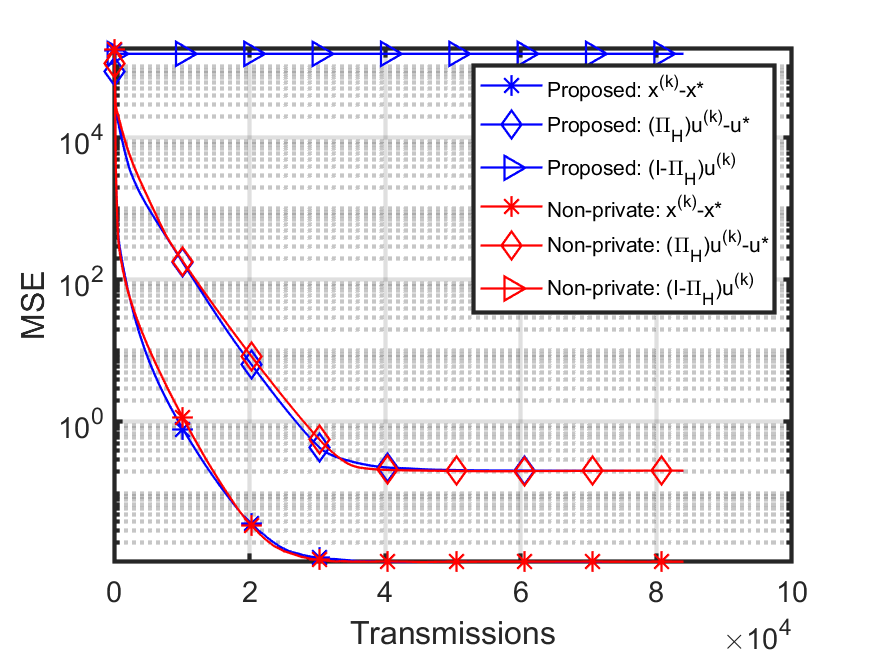} 
\caption{}
\end{subfigure}
\begin{subfigure}{0.32\textwidth}
\includegraphics[width=0.9\linewidth]{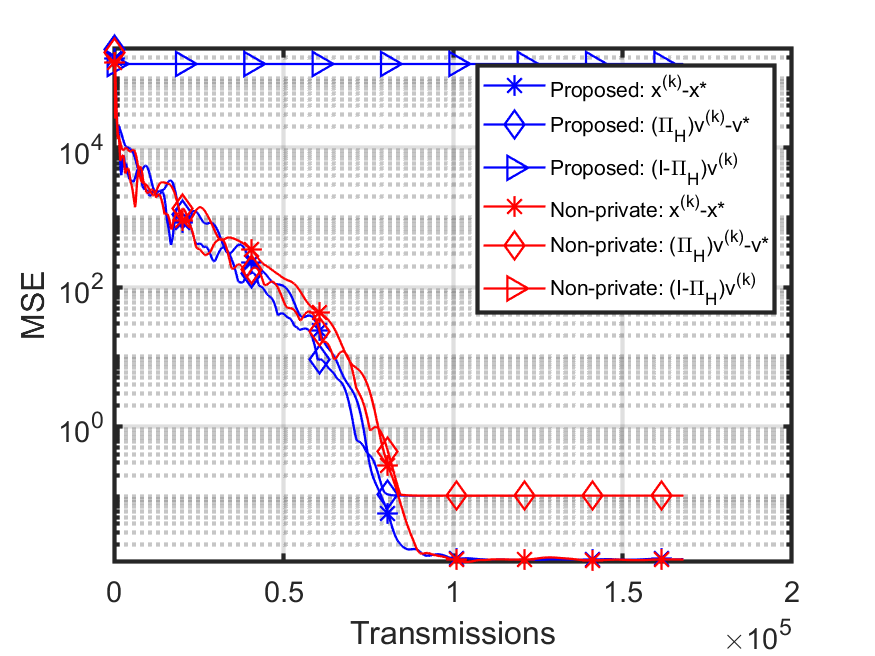}
\caption{}
\end{subfigure}
\begin{subfigure}{0.32\textwidth}
\includegraphics[width=0.9\linewidth]{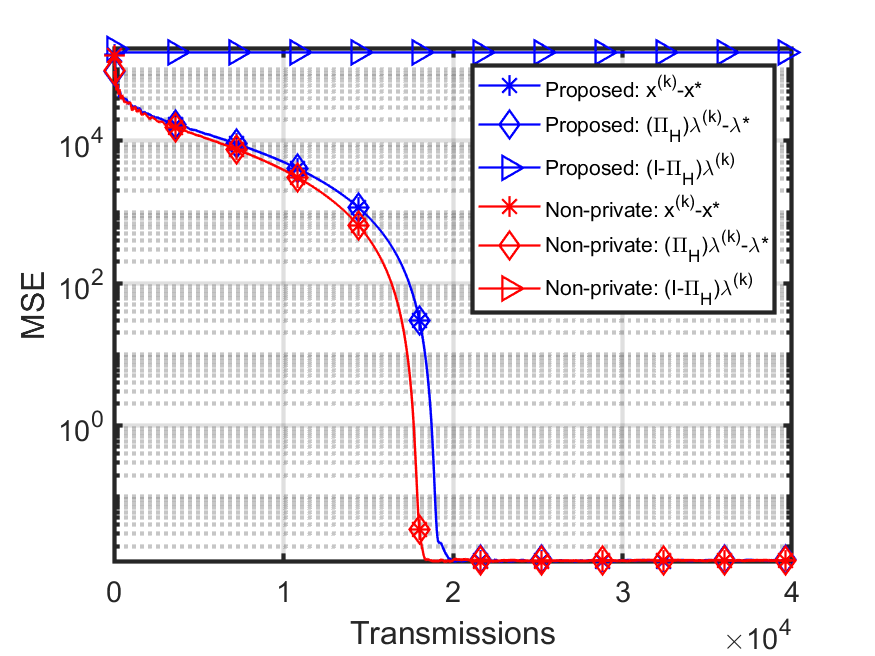}
\caption{}
\end{subfigure}
\caption{Distributed LASSO with two different initializations of the dual variable with a variance of $10^{6}$: convergence of the optimization variable, the convergent and non-convergent   component of the dual variable of (a) dual ascent, (b) ADMM and (c) PDMM.}
\label{fig.lassoCon}
\end{figure*}

\subsection{Privacy-preserving distributed LASSO}\label{subsec.lasso}
Privacy-preserving distributed LASSO aims to securely find a sparse solution when solving an underdetermined system (in which the number of equations is less than number of unknowns). We thus have a network similar to the previous least squares section but with the dimension $P_n< u$ to ensure an underdetermined system. The distributed LASSO is formulated as a $\ell_1$-regularized distributed least squares problem given by
\begin{align} \label{eq.Lasso}
{ \displaystyle \min_{\bx_i}}&  ~{\sum_{i \in \mathcal{N}} \left(\frac{1}{2}\|\by_{i}-\bQ_i\bx_{i}\|^{2}_{2}+\alpha|\bx_{i}|\right)}  \nonumber\\ {\text { s.t. }} &
{\bx_i=\bx_j, {\forall} (i,j)\in \mathcal{E}}
\end{align}
where $\alpha$ the constant controlling the sparsity of the solution. Because the objective function is convex but not strictly convex, we use averaged PDMM to ensure convergence, the $\bx$-updating function remains the same, and the $\blambda$-updating function in \eqref{eq.pdmmlUpdate} is replaced with a weighted average by 
\begin{align}\label{eq.avepdmmlUpdata}
  &\blambda^{(k+1)} = \theta(\blambda^{(k)}+c\bC(\bx^{(k+1)} -\bx^{(k)}))\nonumber\\ &+ (1-\theta)\left(\bP\blambda^{(k)} + c(\bC\bx^{(k+1)} + \bPC\bx^{(k)}-2\boldsymbol{d})\right),
\end{align}
where $0<\theta<1$ is the constant controlling the average weight. The output correctness is ensured by simply replacing the equation \eqref{eq.pdmmlUpdate} in step \ref{stepLamda} of algorithm \ref{alg:pdmm} with the above equation. The analysis proof of individual privacy follows similarly as the example for distributed least squares described above. Hence, with \eqref{eq.mutualInfo1}, we are able to achieve asymptotically perfect security in distributed LASSO. 

\begin{figure*}[ht]
\begin{subfigure}{0.32\textwidth}
\includegraphics[width=0.9\linewidth]{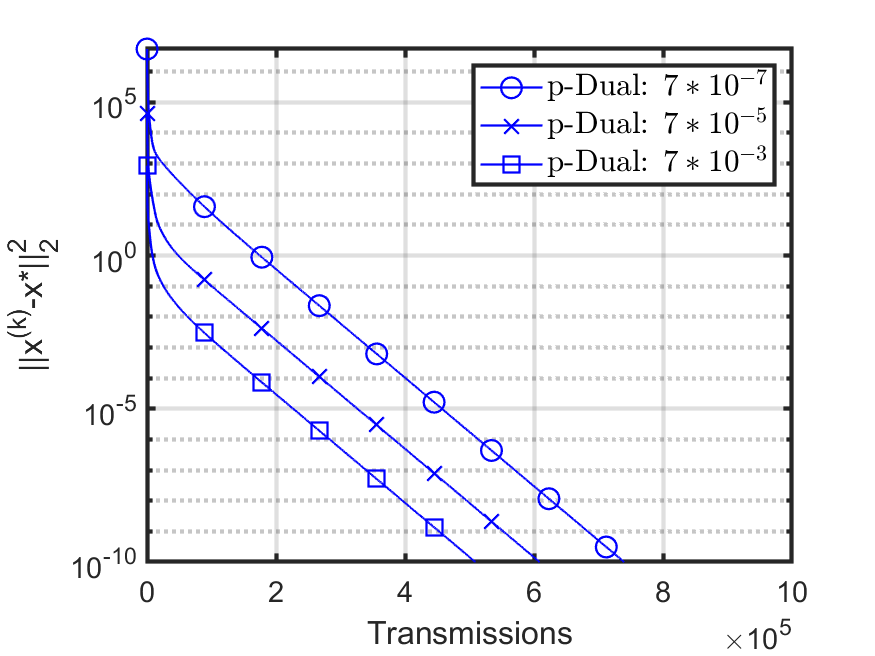} 
\caption{}
\end{subfigure}
\begin{subfigure}{0.32\textwidth}
\includegraphics[width=0.9\linewidth]{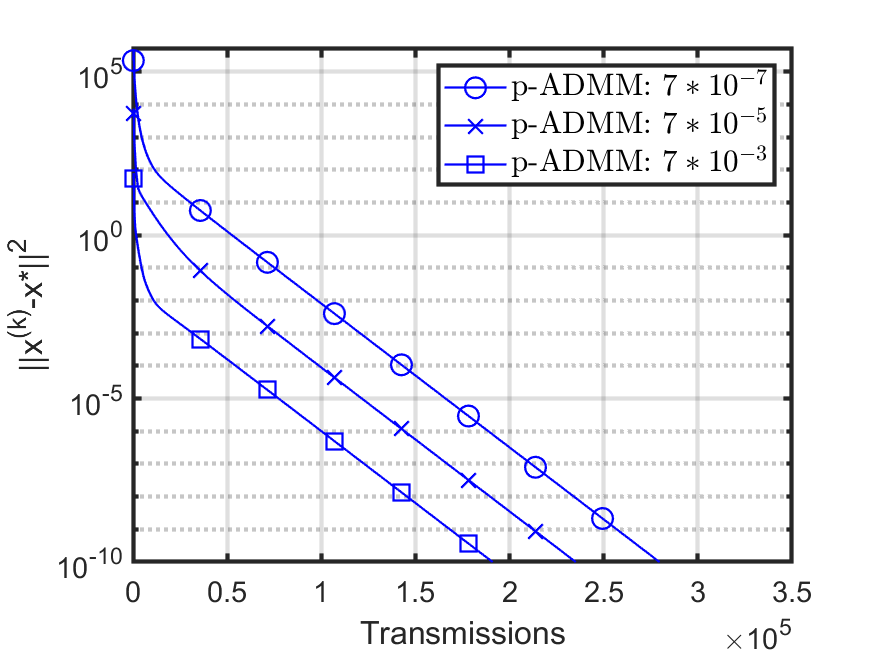}
\caption{}
\end{subfigure}
\begin{subfigure}{0.32\textwidth}
\includegraphics[width=0.9\linewidth]{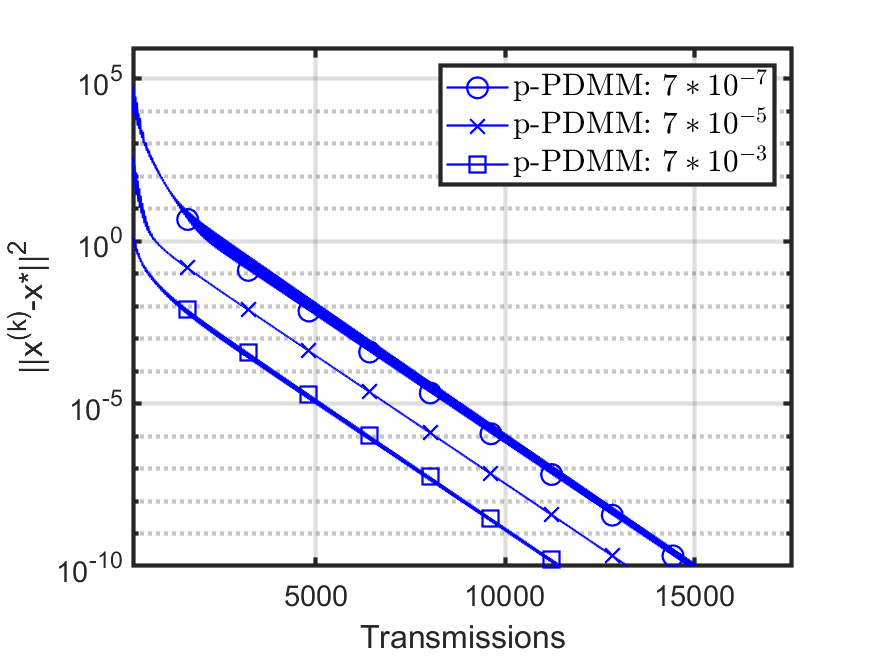}
\caption{}
\end{subfigure}
\caption{Convergence of the optimization variable in terms of three different privacy levels, i.e., approximately $7\times10^{-3}$, $7\times10^{-5}$ and $7\times10^{-7}$ bits, for (a) proposed dual ascent (p-Dual), (b) proposed ADMM (p-ADMM) and (c) proposed PDMM (p-PDMM) in a distributed least squares application.}
\label{fig.lsPri}
\end{figure*}

\section{Numerical results} \label{section:numerical}
In this section, several numerical tests\footnote{The code for reproducing these results is available at https://github.com/qiongxiu/PrivacyOptimizationSubspace} are conducted to demonstrate both the generally applicability and the benefits of the proposed subspace perturbation in terms of several important parameters including accuracy, convergence rate, communication cost and privacy level. 

We simulated a distributed network by generating a random graph with $n=20$ nodes and the communication radius $r$ was set at $r^2 = 2\frac{\log n}{n}$ so ensure that the graph is connected with high probability \cite{dall2002random}. 
For simplicity, all the private data in each application are randomly generated from a Gaussian distribution with unit variance, and all the optimization variables are initialized with zeros. Additionally, from Proposition \ref{prop:2}, we initialize the dual variables with a Gaussian distribution with a variance of $10^{2}, 10^{4}$, and $10^{6}$, thereby ensuring that the privacy loss is approximately $7\times10^{-3}$, $7\times10^{-5}$, and $7\times10^{-7}$ bits, respectively. 

\subsection{General applicability}
In Fig. \ref{fig.aveCon},~\ref{fig.lsCon} and \ref{fig.lassoCon}, we compare the convergence behavior of the proposed subspace perturbation methods (blue lines) with traditional non-private approaches (red lines) by using three distributed optimizers in three applications: distributed average consensus, least squares and LASSO. More specifically, the blue lines indicate that the dual variables are randomly initialized with a variance of $10^{6}$, such that the non-convergent component (blue line with triangle markers) can protect the private data, whereas the red lines mean that the dual variables are initialized within the convergent subspace, and the private data are therefore not protected, because the non-convergent   component is zero (the red lines with triangle markers are not shown in the plots).
We can see that the proposed approach has no effect on the accuracy, because all the optimization variables converge to the same optimum solution as the non-private counterparts. Overall, we can conclude that 
\begin{enumerate}
  \item the proposed approach is able to achieve both accuracy and privacy simultaneously;
  \item it is able to solve a variety of convex problems; 
  \item it is generally applicable to a broad range of distributed convex optimization methods.
\end{enumerate}
\subsection{Privacy level-invariant convergence rate}
Another important aspect to quantify the performance is the influence on the convergence rate when considering privacy. Because the convergence rates of the discussed distributed optimizers depend only on the underlying graph topologies rather than the initializations, initializing the dual variable with greater variance, i.e., a higher privacy level can be obtained, will therefore not change the convergence rate; and it will only result in only a larger offset in the initial error. To validate these results, in Fig. \ref{fig.lsPri} we show the convergence behavior of the proposed approaches in the distributed least squares problem under three different privacy levels by assigning the noise variance as $10^{2}, 10^{4}$ or $10^{6}$.
As expected, in all optimizers, the convergence rate remains identical regardless of the privacy level. 
\begin{figure}
  \centering
  \includegraphics[width=.40\textwidth]{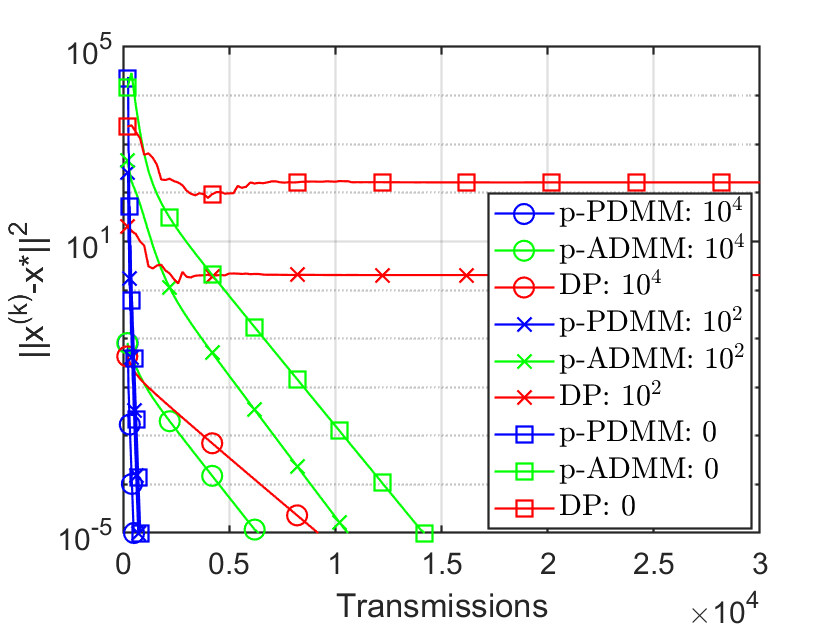}
  \caption{Performance comparison: convergence behavior under three differential privacy levels using the proposed p-PDMM, p-ADMM and an existing differential privacy (DP) approach.}
  \label{fig:dpComp}
\end{figure}
\subsection{Comparison with differential privacy}
To demonstrate the benefits of the proposed method, in Fig. \ref{fig:dpComp}, we compare the proposed approaches (both p-PDMM and p-ADMM) with a differential privacy approach \cite{nozari2017differentially} by using the distributed average consensus application. We consider three cases in which the noise variance is set at $0, 10^{2}, 10^{4}$. We can see that the accuracy of the differential privacy approach decreases with increasing privacy level. Hence, there is a trade-off between privacy and accuracy. Additionally, the convergence rates of differential privacy approaches will also be affected when increasing the level of privacy, because noise is inserted at every iteration, and a higher privacy level will also result in a slower convergence rate.

\begin{figure}
  \centering
  \includegraphics[width=.40\textwidth]{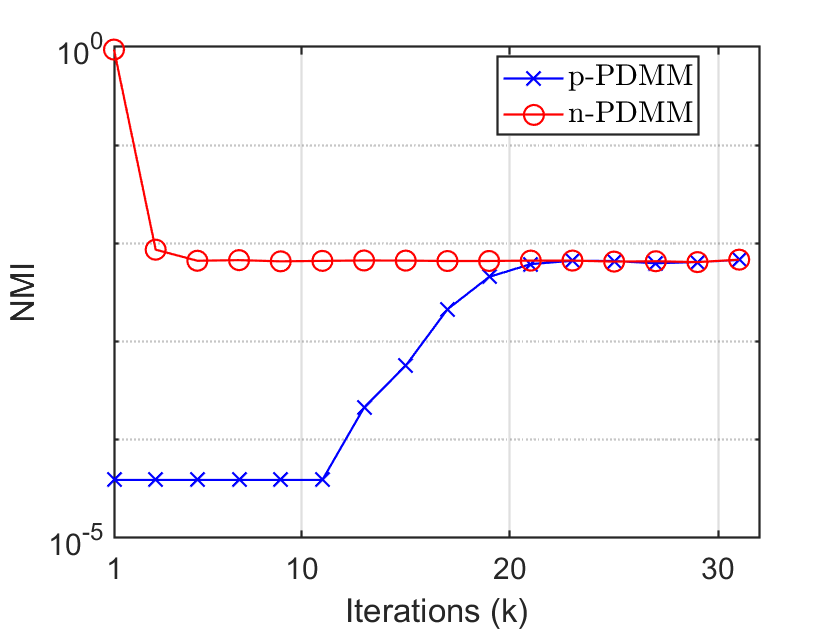}
  \caption{Normalized mutual information of an arbitrary node $i$ (i.e., 
  $\frac{I(S_{i}; X_{i}^{(k+1)})}{I(S_{i}; S_{i})}$) using the proposed p-PDMM and non-private PDMM (n-PDMM) for each iteration.}
  \label{fig:nmi}
\end{figure}
\subsection{Information loss over the iterative process}
To visualize how the information loss behaves during the iterative process, we perform $10^{5}$ Monte Carlo simulations and estimate the normalized mutual information $\frac{I(S_{i}; X_{i}^{(k+1)})}{I(S_{i}; S_{i})}$ of distributed average consensus using the non-parametric entropy estimation toolbox (npeet) \cite{ver2000non}. In Fig. \ref{fig:nmi}, we show the estimated normalized mutual information of the proposed p-PDMM with a noise variance of $10^{6}$ and traditional non-private PDMM, in which the dual variables are initialized with zeros. We can see that both approaches ultimately have the same information loss; i.e., the lower bound is $I^{*}_{i}=I(S_{i};X_{i}^{*})$, because they all converge to the same average result $\bx^{*}_i=n^{-1}\sum_{i \in \mathcal{N}} \bs_i $. As expected, the information loss of the proposed p-PDMM never exceeds the lower bound; hence, the proposed approach achieves perfect security. However, the n-PDMM reveals all the information about $s_i$ at the first iteration, i.e.,  $I(S_{i}; X_{i}^{(1)})=I(S_{i}; S_{i})$ because $\bx_i^{(1)} = \frac{\bs_{i}}{1+cd_i}$ based on \eqref{eq.xup_ave}; thus the privacy is not protected at all. 

\section{Conclusions}\label{section:conclusion}
In this paper, a novel and general subspace perturbation method was proposed for privacy-preserving distributed optimization. As a noise insertion approach, this method is more practical than SMPC based approaches in terms of both computation and communication costs. Additionally, by inserting noise in subspace, it circumvents the trade-off between privacy and accuracy in traditional noise insertion approaches such as differential privacy. Moreover, the proposed method is generally applicable to various optimizers and all convex problems. 
Furthermore,  we consider both eavesdropping and passive adversary models; in the former case, only secure channel encryption in the initialization is required, and in the latter case, the private data of each honest node are protected as long as the node has one honest neighbor. 
\appendices
\section{Proof of Proposition~\ref{prop:1}}
\label{ap.1}
\begin{proof}
\begin{align*}
 I(X_{1},&\ldots,X_{n}; Z_{1},\ldots,Z_{n})\\
 &=h(Z_{1},\ldots,Z_{n}) - h(Z_{1},\ldots,Z_{n}|X_{1},\ldots,X_{n})\\
 &\stackrel{\text{(a)}}{=} h(Z_{1},\ldots,Z_{n}) - h(Y_{1},\ldots,Y_{n})\\
&\stackrel{ \text {(b) } }{=} \sum_{i=1}^{n}h\big(Z_{i} | Z_{1}, \ldots, Z_{i-1}\big)-\sum_{i=1}^{n}h\big(Y_{i}\big)\\
 &\stackrel{ \text{(c)}}{\leq} \sum_{i=1}^{n} h\big(Z_{i}) - \sum_{i=1}^{n} h(Y_{i})\\
 &\stackrel{ \text{(d)}}{=} \sum_{i=1}^{n} I(X_{i};Z_{i})\\ &\stackrel{\text{(e)}}= \sum_{i=1}^{n} I(X_{i}/\sigma_{Z_i};Z'_{i}).
 \end{align*}
 Step (a) holds, as $h(Z_{i} | X_{i}) = h(Y_{i})$, (b) holds from the chain rule of differential entropy and the condition that the $Y_{i}$'s are independent random variables,
 (c) holds, as conditioning decreases entropy, (d) holds, as $h\big(Z_{i}) -h(Y_{i})=h(Z_{i})-h(Z_{i}|X_{i})=I(X_{i};Z_{i})$, and (e) holds from the fact that mutual information is scaling invariant. 
As a consequence, 
\begin{align*}
 \lim_{\sigma^{2}_{Y_i}\rightarrow\infty} \sum_{i=1}^{n} I(X_{i};Z_{i})&= \lim_{\sigma_{Z_i}\rightarrow \infty} \sum_{i=1}^{n} I(X_{i}/\sigma_{Z_i};Z'_{i})\\&=\sum_{i=1}^{n}I(0;Z'_{i})=0.
\end{align*}
For the case $W_{i}=X_{i}Y_{i}$, we have
\begin{align*}
  &h(W_{i} | X_{i}) = \int p(X_i) h(W_{i} | X_{i} = x_i) d X_i \\
  &= \int p(x_i) h(X_iY_{i} | X_{i} = x_i) d X_i \\
  &\stackrel{{\text{(a)}}}{=} \int p(x_i) h(Y_{i}) d X_i = h(Y_{i}),
\end{align*}
where (a) holds, because the probability measure of the event $X_i = 0$ is zero. Hence, 
the proof of our second claim follows the same lines as the one presented above, and we conclude that
\begin{align*}
 \lim_{\sigma^{2}_{Y_i}\rightarrow\infty} I(X_{1},&\ldots,X_{n}; W_{1},\ldots,W_{n})\\
 &\leq \lim_{\sigma_{W_i}\rightarrow \infty} \sum_{i \in \mathcal{N}} I(X_{i}/\sigma_{W_i};W'_{i})=0,
\end{align*}
thereby proving our claims. 
\end{proof}
\section{Proof of Proposition~\ref{prop:2}}
\label{ap.2}
\begin{proof}
As $X$ and $Y$ are independent, we have $\sigma^{2}_{Z}=\sigma^{2}_{X}+\sigma^{2}_{Y}$.
For a Gaussian random variable with variance $\sigma^2$, the differential entropy is given by $\frac{1}{2}\log(2\pi e \sigma^2)$, 
so that 
\begin{align}
  I(X;Z)&=h(Z) - h(Y)\nonumber\\
  &\stackrel{{\text{(a)}}}{\leq} \frac{1}{2}\log(2\pi e \nonumber \sigma^2_{Z})-\frac{1}{2}\log(2\pi e \sigma^2_Y) \nonumber\\
  &=\frac{1}{2}\log(1 +\sigma_{X}^2/\sigma_{Y}^2),
\end{align}
where (a) holds, because the maximum entropy of a random variable with fixed variance is achieved by a Gaussian distribution. 
\end{proof}

\bibliographystyle{IEEEbib}
\bibliography{dualpath}

\end{sloppy}

\end{document}